\documentclass[journal,letterpaper,twocolumn,twoside,nofonttune]{IEEEtran}
\usepackage{etex}
\normalsize
\usepackage[T1]{fontenc}
\usepackage{amsmath,amssymb,amsfonts}
\usepackage{mathrsfs}
\usepackage{mathabx}
\usepackage{amsbsy}
\usepackage{graphicx}
\usepackage{graphics}
 \usepackage{epstopdf}
\usepackage{algorithm}
\usepackage{algorithmic}
\usepackage{subfigure}
\usepackage{cite}
\usepackage{multirow}

\usepackage{enumerate}
\usepackage{caption}

\usepackage{setspace}
\usepackage{mathtools}
\usepackage{lipsum}
\usepackage[algo2e]{algorithm2e}
\usepackage{diagbox}
\usepackage[thinc]{esdiff}
\usepackage{tikz,pgfplots}
\usetikzlibrary{spy}
\usepackage{ctable}











\title{\Huge$\,$\\[-2.75ex]
{ List-Decodable Coded Computing:\\ Breaking the Adversarial Toleration Barrier  }\\[0.50ex]}

\author{\large%
	Mahdi Soleymani, Ramy E. Ali, Hessam Mahdavifar, and A. Salman Avestimehr
	\vspace{-.25in}
	\thanks{This material is based upon work supported by Defense Advanced Research Projects Agency (DARPA) under Contract No. HR001117C0053, ARO award W911NF1810400, NSF grants CCF-1703575, CCF-1763673, CCF--1763348, CCF--1909771 and CCF--1941633, MLWINS-2002874, ONR Award No. N00014-16-1-2189, and a gift from Intel/Avast/Borsetta via the PrivateAI institute. The views, opinions, and/or findings expressed are those of the author(s) and should not be interpreted as representing the official views or policies of the Department of Defense or the U.S. Government. 
	
	This article was presented in part at
the proceedings of the 2021 IEEE International Symposium on Information Theory. The associate editor coordinating the review of this article
and approving it for publication was P. Grover. (Corresponding author:
Mahdi Soleymani.)
	
	M.\ Soleymani and H.\ Mahdavifar are with the Department of Electrical Engineering and Computer Science, University of Michigan, Ann Arbor, MI 48104 (email: mahdy@umich.edu and hessam@umich.edu).}
	\thanks{R.E.\ Ali and A.\ Salman Avestimehr are with the Department of Electrical
		Engineering, University of Southern California, Los Angeles, CA 90089 USA
		(e-mail: reali@usc.edu and avestimehr@ee.usc.edu).}
}



\newtheorem{theorem}{{Theorem}}
\newtheorem{lemma}{Lemma}

\newtheorem{corollary}{Corollary}

\newtheorem{definition}{{Definition}}

\newtheorem{remark}{{\textbf{Remark}}}
\AfterEndEnvironment{remark}{\noindent\ignorespaces}


\DeclareMathAlphabet{\mathbfsl}{OT1}{ppl}{b}{it} 

\newcommand{\bA}{\mathbfsl{A}}

\newcommand{\bI}{\mathbfsl{I}}

\newcommand{\bM}{\mathbfsl{M}} 
\newcommand{\bN}{\mathbfsl{N}}

\newcommand{\bV}{\mathbfsl{V}}
 
\newcommand{\bX}{\mathbfsl{X}}
\newcommand{\bY}{\mathbfsl{Y}} 
\newcommand{\bZ}{\mathbfsl{Z}}

\newcommand{\bc}{\mathbfsl{c}}

\newcommand{\bv}{\mathbfsl{v}}
 
\newcommand{\bx}{\mathbfsl{x}}
\newcommand{\by}{\mathbfsl{y}} 
\newcommand{\bz}{\mathbfsl{z}}

 \newcommand{\bff}{\mathbfsl{f}} 
 


\newcommand{\ceil}[1]{\left\lceil #1 \right\rceil}
\newcommand{\floor}[1]{\left\lfloor #1 \right\rfloor}


\newcommand*{\rom}[1]{\expandafter\romannumeral #1}




\makeatletter
\newcommand{\AlignFootnote}[1]{%
  \ifmeasuring@
  \else
    \iffirstchoice@
      \footnote{#1}%
    \fi
  \fi}
\makeatother


\usepackage[utf8]{inputenc}

\newcommand{\be}[1]{\begin{equation}\label{#1}}
\newcommand{\ee}{\end{equation}}


\renewcommand{\leq}{\leqslant}
 
\renewcommand{\geq}{\geqslant}


\renewcommand{\Bbb}{\mathbb}

\newcommand{\R}{{\Bbb R}}



\newcommand{\Tref}[1]{Theo\-rem\,\ref{#1}}

\newcommand{\Lref}[1]{Lem\-ma\,\ref{#1}}
\newcommand{\Cref}[1]{Co\-ro\-lla\-ry\,\ref{#1}}


\newcommand{\Fq}{{{\Bbb F}}_{\!q}}


\newcommand{\deff}{\mbox{$\stackrel{\rm def}{=}$}}


\newcommand{\Span}[1]{{\left\langle {#1} \right\rangle}}

\DeclareMathOperator*{\argmax}{arg\,max}


\begin{document}

\maketitle

\begin{abstract}
We consider the problem of coded computing, where a computational task is performed in a distributed fashion in the presence of adversarial workers. We propose techniques to break the adversarial toleration threshold barrier previously known in coded computing. More specifically, we leverage list-decoding techniques for folded Reed-Solomon codes and propose novel algorithms to recover the correct codeword using side information. In the coded computing setting, we show how the master node can perform certain carefully designed extra computations to obtain the side information. The workload of computing this side information is negligible compared to the computations done by each worker.  This side information is then utilized to prune the output of the list decoder and uniquely recover the true outcome. We further propose folded Lagrange coded computing (FLCC) to incorporate the developed techniques into a specific coded computing setting. Our results show that FLCC outperforms LCC by breaking the barrier on the number of adversaries that can be tolerated. In particular, the corresponding threshold in FLCC is improved by a factor of two asymptotically compared to that of LCC. 
\end{abstract}

\begin{IEEEkeywords} 
Coded computing, secure computing, Byzantine adversaries, list-decoding, folded Reed-Solomon codes.  
\end{IEEEkeywords}
\section{Introduction}

Recently, ideas from the coding theory literature have been widely leveraged in large-scale distributed computing and learning problems to alleviate major performance bottlenecks including latency in computations, communication overheads, and stragglers \cite{lee2017speeding,li2017coding,yu2019lagrange,li2020coded}. This has led to the emergence of the \emph{coded computing} paradigm by combining coding theory and distributed computing, also addressing critical issues such as security and privacy in distributed settings. More specifically, there has been an increasing interest in recent years toward adopting coded computing techniques in computationally-demanding machine learning tasks that give rise to several privacy and security issues \cite{so2019codedprivateml, so2020turbo, so2020scalable, soleymani2020privacy, jahani2020berrut, prakash2020coded}. In such settings, the underlying dataset must remain private from the cloud and the contributing computational workers, as it may contain highly sensitive information such as biometric data of patients in a hospital \cite{raghupathi2014big} or customers' data of a company \cite{mcafee2012big}. Moreover, the outcome of a distributed learning scheme, e.g., model parameters trained on a dataset, must be \emph{secured} against Byzantine (malicious) adversaries that attack the cloud or are present as adversarial workers aiming at altering the outcome either for their benefits or to deceive the other users.  

A well-established architecture often considered in coded computing consists of a \emph{master} node and a set of workers having communication links with the master node. The goal for the master is to perform a certain computational job, e.g., training a model on its dataset, with the help of the workers. To this end, the master disperses its dataset among the workers that operate in parallel and return their results to the master to recover the outcome of the computational job efficiently. Then a problem of significant interest is the following: what fraction of adversarial workers can be tolerated, i.e., the master is still able to recover the true outcome even though the adversaries have returned corrupted results, in such coded computing schemes? To answer this question, the well-known classical results on the error correction capability of linear codes and, in particular, maximum distance separable (MDS) codes such as Reed-Solomon (RS) codes are leveraged, see, e.g., \cite{yu2019lagrange}. The tightness of such results on adversarial toleration is based on certain assumptions on the underlying code and the corresponding decoder employed by the master node. For instance, it is implicitly assumed that the master node performs the decoding only given the returned results by the workers and does not perform any extra computations to gain \textit{side information} about the computation outcome. Also, the decoder employed by the master is assumed to be the classical decoder that recovers errors up to half the minimum distance bound. However, the list-decoding paradigm offers the potential to decode errors beyond this bound \cite{elias1957list}.  In fact, there is a long history on list-decoding algorithms for RS codes with the end result of achieving the information-theoretic Singleton bound $1-R$, where $R$ is the code rate, on the decoding radius for a variant of RS codes, called \textit{folded} RS (FRS) codes \cite{sudan1997decoding, guruswami1998improved, parvaresh2005, guruswami2008explicit}. This improves upon the half the minimum distance bound, expressed as $(1-R)/2$ for MDS codes, by a factor of $2$ closing the gap with the Singleton bound \cite{guruswami2008explicit}. 

We consider the following fundamental question in this paper: is it possible to break the adversarial toleration threshold barrier established in the coded computing literature? We show that the answer to this question is yes. To this end, we leverage the advances in the list-decoding literature as well as the particular coded computing setting that naturally allows the master node to have access to side information and uniquely determine the computation outcome.

\subsection{Our contributions}
 In this paper, we show how to adapt the \textit{folding} technique in algebraic coding to the realm of coded computing, where the underlying computational job is a polynomial evaluation over the dataset. Then it is shown how the master node can employ an off-the-shelf FRS list-decoding algorithm, e.g., the linear-algebraic algorithm proposed in \cite{guruswami2013linear}, to the results returned from the workers. This results in a low-dimensional linear subspace that contains the true outcome of the computation assuming a certain bound on the number of adversaries. In order to uniquely recover the true outcome, we propose two schemes which involve the master node performing certain carefully designed extra computations to obtain side information about the outcome. This side information is then utilized to prune the subspace of possible outcomes, i.e., the output of the FRS list-decoding algorithm, to uniquely recover the true outcome. In both schemes, the cost of computing the side information is negligible compared to the computation load of each worker. 
 Specifically, our contributions are as follows.
 \begin{enumerate}[1) ]
     \item We propose a deterministic pruning algorithm, in which the master node waits until the results are returned by the worker nodes and the FRS list-decoding algorithm is applied to the returned results. Then the master node carefully selects a certain small subset of evaluation points and computes the polynomial evaluation over these points to obtain the side information.  It is shown that this can be done in such a way that the true outcome is uniquely recovered from the output list. 

     \item We also propose a probabilistic pruning algorithm in which the side information is obtained by computing the polynomial evaluation over a randomly selected set of evaluation points. This can be done in parallel to the tasks being performed by the workers resulting in a lower latency compared to the first approach. Then it is shown that the true outcome can be uniquely recovered with a \emph{high} probability. Moreover, if the unique recovery is not possible in this case, the master node can identify it as a decoding failure. We show results outperforming the state-of-the-art schemes in terms of the lower bound on the probability of successful decoding and the amount of side information needed for unique decoding.     
     \item To illustrate how our proposed protocols break the adversarial toleration thresholds in a certain class of coded computing schemes, we consider the Lagrange coded computing (LCC) scheme. We introduce a \emph{folded} version of LCC, referred to as FLCC. Similar to other mainstream coded computing schemes, the master node in LCC attempts to decode the computation outcome  merely based on the  computations of the workers. By relaxing this restriction, the master node in FLCC invokes our proposed pruning algorithms together with list-decoding FRS codes and uniquely recovers the outcome. The performance of FLCC with both the deterministic and the probabilistic pruning algorithms is characterized and compared to that of LCC. Our results indicate that the cost of overcoming a Byzantine worker in FLCC can be reduced to be almost the same as that of a straggler worker, in the characterization of recovery thresholds, as opposed to LCC in which Byzantine adversaries cost twice as stragglers.  

 \end{enumerate}

\subsection{Related work} 

The problem of list-decoding with side information was initially considered in \cite{guruswami2003list} for binary codes in a communication setup. In this setting, a clean noise-free channel is assumed over which a \emph{small} amount of side information, compared to the length of the message, is provided to the receiver. This side information consists of a random hash function along with its value over the message.  Another variant of RS codes, known as \emph{derivative} codes that also achieve the \textit{optimal} performance have been studied in \cite{kopparty2015list,guruswami2013linear}.   In 
\cite{guruswami2013linear}, a linear-algebraic list-decoding approach along with pruning algorithms with side information specific to derivative codes were proposed to uniquely recover the codeword either deterministically or 
with high probability. This approach also can list-decode FRS codes with side information to recover the codeword with high probability. However, this approach was not extended to deterministically recover the codeword. The reason for this is that to deterministically recover the codeword using this approach, the side information may not be decided before starting to decode. While this is inconvenient in communication setups, it is possible in coded computing as the master node plays the role of both the encoder and the decoder.       
 List-decoding of FRS codes has been also incorporated in the context of secret sharing to enhance the security \cite{cramer2015linear,safavi2015model,cheraghchi2019nearly}. However, these works do not consider computations over data and are only concerned with recovering the data from the secret shares. 

Coded computing has recently gained much interest due to its promise to overcome several issues raised in large-scale distributed computing and machine learning.
It has been utilized for straggler mitigation in various distributed computing tasks \cite{li2016unified,yu2020straggler,reisizadeh2019coded,aliasgari2018coded, jamali2019coded,yang2021edge}. Several schemes for distributed matrix-matrix multiplication, which is one of the main building blocks for various machine learning algorithms, have been also proposed in the literature \cite{yu2020entangled,aliasgari2020private,d2020gasp,bitar2019private,nodehi2019secure}. Moreover, certain protocols have been introduced for computations over real-valued data \cite{fahim2019numerically,ramamoorthy2019numerically,das2019distributed,charalambides2020numerically}. Also, recently, analog coded computing protocols have been introduced to enable privacy for large-scale distributed machine learning in the analog domain \cite{soleymani2020privacy, soleymani2020analog}. Improving the adversarial toleration threshold of LCC has been also considered recently in \cite{yang2021coded} for Boolean computations and sparse polynomials and in \cite{Tang2021} for matrix-matrix multiplication. In addition, this problem was also studied for the probabilistic noise model in \cite{dutta2018unified,subramaniam2019collaborative}. Our work, however, considers any polynomial-based computations and the worst-case adversarial model. Moreover, none of these prior works incorporated list-decoding ideas into the coded computing protocols.

The rest of this paper is organized as follows. In Section\,\ref{sec:background}, some background on list-decoding of RS codes and their variants is provided. The system model considered in this paper is discussed in Section\,\ref{Section: System Model} and FLCC is proposed. Our results on list-decoding of FRS codes with side information are shown in Section\,\ref{sec:listdecoding-results}. In Section\,\ref{Section: Folded Lagrange Coded Computing} it is shown how our techniques applied to FLCC improve upon the security of LCC against Byzantine adversaries. Finally, the paper is concluded in Section\,\ref{sec:Conclusion}.

\section{Background}\label{sec:background}
In this section, we first provide a brief background on list-decoding.   Then, we briefly overview Lagrange coded computing (LCC) that will be used later in Section\,\ref{Section: Folded Lagrange Coded Computing} to showcase how  the list decoding ideas are leveraged to improve the adversarial toleration threshold in a well-established  coded computing scheme.  
\subsection{List decoding FRS codes} \label{Pre_FRS}
We begin by introducing the notations that are used throughout this paper. For a positive integer $i$, the set $\{1, 2, \cdots, i \}$ is denoted by $[i]$. The number of positions at which two strings of length $n$, $\by$ and $\by'$, differ is denoted by $\Delta(\by, \by') \deff |\{i: y_i \neq y'_i \}|$ and their relative distance is denoted by $\delta(\by, \by') \deff \Delta(\by, \by')/n$. A code $C: \Sigma^k \mapsto \Sigma^n$ of length $n$ over alphabet $\Sigma$, where $|\Sigma|=q$, is denoted by $[n, k]_q$.  A finite field of size $q$ is denoted by $\Fq$ and the set of all non-zero elements of $\Fq$ is represented by $\Fq^*$.

An $[n, k]_q$ MDS code such as a Reed-Solomon code can always correct up to $\rho_{\mathrm U}(R) = (1-R)/2$ normalized number of errors, also referred to as decoding radius, where $R\,\deff\,k/n$ is the rate of the code and normalization is done by dividing the number of errors by $n$. In order to correct errors beyond this bound, list-decoding \cite{elias1957list, wozencraft1958list, goldreich1989hard, sudan1997decoding} relaxes the unique decoding requirement and allows the decoder to output a list of codewords. Specifically, given $0 \leq \rho \leq 1$, an $[n, k]_q$ code $ C \subseteq \Sigma^n$ is said to be $(\rho, L)$-list decodable if for every $\by \in \Sigma^n$, the set $\mathcal L\,\deff\,\{\bc \in \ C | \delta(\by, \bc) \leq \rho n\}$ has at most $L$ elements. Based on this relaxation, Guruswami and Sudan \cite{guruswami1998improved}  showed that Reed-Solomon codes can be list-decoded up to the decoding radius of $\rho_{\mathrm{GS}}(R)=1-\sqrt{R}$. However, it is well-known that there exist codes that can be list-decoded up to a decoding radius of $1-R-\epsilon$ with a list size of at most $O(1/\epsilon)$\cite{elias1991error, guruswami2002combinatorial}. Parvaresh and Vardy further improved Guruswami-Sudan decoding radius by introducing a variant of Reed-Solomon codes, also referred to as Parvaresh-Vardy codes \cite{parvaresh2005}, followed by Guruswami and Rudra \cite{guruswami2008explicit} who showed that such variations can be more efficiently realized by folding the Reed-Solomon codes, thereby improving the decoding radius to approach the ultimate Singleton bound $1-R$. Next, the definition of folded Reed-Solomon codes is provided.
\begin{definition}\label{FRS-deff}
($m$-Folded Reed-Solomon Code \cite{guruswami2008explicit}) Let $\gamma$ be a primitive element of $\mathbb F_q$, $n \leq q-1$ be a multiple of $m$, and $k$ with $1\leq k<n$ be the degree parameter. The folded Reed-Solomon (FRS) code $\mathrm{FRS}_q^{(m)}[n,k]$ is a code over alphabet $\mathbb F_q^m$ that encodes a polynomial $f \in \mathbb F_q[X]$ of degree at most $k-1$ as 
\begin{align*}
\small{
\left( \left [ \begin{array}{c}
f(1)\\
f(\gamma)\\
\vdots\\
f(\gamma^{(m-1)})
\end{array} \right], 
\left [ \begin{array}{c}
	f(\gamma^m)\\
	f(\gamma^{m+1})\\
	\vdots\\
	f(\gamma^{(2m-1)})
\end{array} \right], 
\hdots,
\left [ \begin{array}{c}
	f(\gamma^{n-m})\\
	f(\gamma^{n-m+1})\\
	\vdots\\
	f(\gamma^{(n-1)})
\end{array} \right]
\right ),}
\end{align*}
where the block length of  $\mathrm{FRS}_q^{(m)}[n,k]$ is $N=n/m$, and its rate is $R=k/n$.
\end{definition}
\noindent Guruswami and Rudra \cite{guruswami2008explicit} showed that FRS codes can be efficiently list-decoded up to the decoding radius $\rho_{GR}(R)=1-R-\epsilon$ with a list size of $L=n^{O(1/ \epsilon)}$. Later, it was shown in \cite{guruswami2013linear} that this can be done using an alternative linear-algebraic approach. Next, we recall this result. 
\begin{lemma}[List-Decoding of FRS codes \cite{guruswami2013linear}]
\label{lin-alg-list-decoding}
	For the  FRS code $\mathrm{FRS}_q^{(m)}$ of block length $N=n/m$ and rate $R=k/n$, the following holds for all integers $s \in [m]$. Given a received word $\by \in (\mathbb F_q^m)^N$, using $O(n^2+sk^2)$ operations over $\mathbb F_q$, one can find a subspace of dimension at most $s-1$ that contains all encoding polynomials $f \in \mathbb F_q[X]$ of degree less than $k$ whose FRS encoding differs from $\by$ in at most a fraction 
	\begin{align}\label{err-frac}
	\frac{s}{s+1}\left(1-\frac{mR}{m-s+1}\right)
	\end{align}
	of the $N$ codeword positions.
\end{lemma}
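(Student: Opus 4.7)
My plan follows the linear-algebraic list-decoding framework: set up an interpolation problem with a carefully chosen degree budget, show that every candidate polynomial close to $\by$ forces a particular functional identity, bound the dimension of the resulting solution set, and finally count arithmetic operations.

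For the interpolation step, I would seek a nonzero $Q(X, Y_1, \ldots, Y_s) = A_0(X) + A_1(X) Y_1 + \cdots + A_s(X) Y_s$ with $\deg A_0 \leq D+k-1$ and $\deg A_i \leq D$ for $i \geq 1$, satisfying the linear conditions $Q(\gamma^{jm+i}, y_{j,i}, y_{j,i+1}, \ldots, y_{j,i+s-1}) = 0$ for every block index $j \in \{0,\ldots,N-1\}$ and every window start $i \in \{0,\ldots,m-s\}$, where $y_{j,\ell}$ denotes the $\ell$-th entry of the $j$-th folded symbol of $\by$. This is a homogeneous system in $(s+1)D + k + s$ unknowns with $N(m-s+1)$ equations, so taking $D = \lceil (N(m-s+1)-k-s+1)/(s+1)\rceil$ guarantees a nonzero solution. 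For any candidate $f$ of degree less than $k$, the univariate $h(X) \deff Q(X, f(X), f(\gamma X), \ldots, f(\gamma^{s-1} X))$ has degree at most $D+k-1$; wherever $f$'s FRS encoding agrees with $\by$ on a full folded block $j$, all $m-s+1$ interpolation constraints at block $j$ reduce to $h(\gamma^{jm+i}) = 0$, contributing $m-s+1$ distinct roots of $h$. Hence if the agreement count $t$ satisfies $t(m-s+1) > D+k-1$, then $h \equiv 0$. Substituting the chosen $D$ and rearranging rewrites this threshold as an error fraction at most the bound in \eq{err-frac}.

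Every surviving $f$ therefore satisfies the affine equation $\sum_{i=1}^{s} A_i(X) f(\gamma^{i-1} X) = -A_0(X)$, which is linear in the $k$ coefficients of $f$. To establish the dimension bound, I would first normalize $Q$ so that the constant terms $A_1(0), \ldots, A_s(0)$ are not all zero (pulling out a common factor of $X$ if necessary), and then equate coefficients of $X^0, X^1, X^2, \ldots$ in turn. Because each new equation expresses some coefficient $f_\ell$ as an affine combination of $f_0, \ldots, f_{\ell-1}$, a staircase recurrence emerges in which fixing any $s-1$ initial degrees of freedom uniquely determines all remaining coefficients, yielding an affine subspace of dimension at most $s-1$. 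The complexity count is routine: the interpolation system has $\Theta(n)$ variables and equations and can be solved in $O(n^2)$ operations, while back-substituting through the staircase recurrence costs $O(sk^2)$, giving the stated $O(n^2 + sk^2)$ bound.

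The routine parts are the interpolation existence argument, the root-counting identity, and the operation count. The delicate step, where I expect most of the technical work, is the dimension bound: it requires identifying the exact normalization of $Q$ under which the coefficient-matching equations become a staircase system whose kernel has dimension at most $s-1$, rather than a weaker triangular system that might leave more parameters free. Matching the agreement threshold to the precise algebraic form $\tfrac{s}{s+1}\bigl(1 - \tfrac{mR}{m-s+1}\bigr)$ is also slightly fiddly because of the $\pm 1$ shifts arising from the ceiling in $D$ and the strict inequality in the root count; I expect the clean form to emerge only after absorbing lower-order terms into the statement's use of an upper bound.
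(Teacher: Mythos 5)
The paper does not prove this lemma at all---it is imported verbatim from Guruswami and Wang \cite{guruswami2013linear} and used as a black box---so there is no internal proof to compare against. The closest thing the paper offers is the Appendix proof of Lemma \ref{lem_appendix}, which reconstructs the same interpolation-and-root-counting argument (with $S$ erasures), and your sketch tracks that argument faithfully: the linear space of polynomials $Q(X,Y_1,\ldots,Y_s)=A_0(X)+\sum_{i=1}^s A_i(X)Y_i$ with degree budget $D$ for the $A_i$'s and $D+k-1$ for $A_0$, the $N(m-s+1)$ window constraints, the degree/constraint count forcing a nonzero $Q$, the observation that a full agreeing folded block contributes $m-s+1$ roots of $h(X)=Q(X,f(X),\ldots,f(\gamma^{s-1}X))$ so that agreement on more than $(D+k-1)/(m-s+1)$ blocks forces $h\equiv 0$, and the $O(n^2+sk^2)$ operation count. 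The threshold algebra indeed reduces to $\frac{s}{s+1}(1-\frac{mR}{m-s+1})$ with either your ceiling choice or the Appendix's floor choice of $D$, using $mR=k/N$.

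On the step you flag as delicate, one adjustment to your picture. Divide out the largest common power of $X$ from \emph{all} of $A_0,\ldots,A_s$, not only $A_1,\ldots,A_s$, so that not all constant terms vanish. Equating the coefficient of $X^d$ in $\sum_{i\geq 1} A_i(X) f(\gamma^{i-1}X) = -A_0(X)$, the pivot multiplying $f_d$ is $P(\gamma^d)$ where $P(Y)\deff\sum_{i=1}^s A_i(0)\,Y^{i-1}$ has degree at most $s-1$. If $P\equiv 0$, the normalization forces $A_0(0)\neq 0$, so the $X^0$ equation reads $A_0(0)=0$ and the candidate set is empty. Otherwise $P(\gamma^d)=0$ for at most $s-1$ of the distinct points $\gamma^0,\ldots,\gamma^{k-1}$; those $d$'s index the free coefficients, and they need not be the \emph{initial} $s-1$ indices---they sit wherever $\gamma^d$ lands on a root of $P$. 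Every other $f_d$ is then an affine function of the free ones, which gives the affine subspace of dimension at most $s-1$ without any further normalization beyond clearing the common $X$-power.
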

\noindent Note that choosing $s=m=1$ in \Lref{lin-alg-list-decoding} corresponds to the unique decoding radius of $(1-R)/2$, while choosing $s \approx 1/\epsilon$ and $m \approx 1/{\epsilon^2}$ ensures a decoding radius of $1-R-\epsilon$. 

We note that the work of  Guruswami and Wang \cite{guruswami2011linear} only considers errors and not symbol erasures. In other words, it is assumed that all coordinates of the received word are either error-free or corrupt and none of them is erased during the transmission. However, it can be observed that the same algorithm also works if $S$ number of coordinates in $\by$ are erased, as provided in the following lemma.

\begin{lemma}
\label{lem_appendix}
    The result of \Lref{lin-alg-list-decoding} still holds if $N$ is replaced by $N-S$ for the case where $S$ out of $N$ symbols are erased.
\end{lemma}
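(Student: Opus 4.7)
The plan is to rerun the linear-algebraic list-decoding algorithm of Guruswami and Wang \cite{guruswami2013linear} with the sole modification that interpolation constraints at erased positions are dropped. Since an erasure corresponds to a missing (rather than corrupted) symbol, the decoder simply imposes no equation there; intuitively, this makes the interpolation problem easier, not harder, so the same algorithm should work after we adjust the parameter count.

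More concretely, the interpolation step seeks a nonzero polynomial $Q(X, Y_1, \ldots, Y_s) = A_0(X) + \sum_{j=1}^{s} A_j(X) Y_j$ with the same degree bounds on the $A_j$'s as in the original proof, that vanishes on the constraints $Q(\gamma^{im+j}, y_{i,j}, y_{i,j+1}, \ldots, y_{i,j+s-1}) = 0$ for every non-erased block $i$ and every $j \in \{0, 1, \ldots, m-s\}$. This imposes $(N-S)(m-s+1)$ linear conditions on the coefficients of $Q$, strictly fewer than the $N(m-s+1)$ conditions used in \Lref{lin-alg-list-decoding}. A standard dimension-counting argument therefore establishes the existence of a nonzero $Q$ with a choice of degree parameter $D$ that is exactly the analogue of the original one with $N$ replaced by $N-S$.

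For the root-finding step, suppose $f \in \F_q[X]$ of degree less than $k$ is an encoding polynomial whose FRS encoding agrees with $\by$ on $t$ non-erased blocks. At each such block, every one of the $m-s+1$ constraints becomes $Q(\gamma^{im+j}, f(\gamma^{im+j}), \ldots, f(\gamma^{im+j+s-1})) = 0$, so the univariate polynomial $R(X) \deff A_0(X) + \sum_{j=1}^s A_j(X) f(\gamma^{j-1} X)$ picks up $m-s+1$ zeros per matching block, for a total of at least $t(m-s+1)$ zeros. Since $\deg R < D+k$, whenever $t(m-s+1) \ge D+k$ we conclude $R \equiv 0$, which is precisely the linear relation exploited in \Lref{lin-alg-list-decoding} to trap $f$ inside a subspace of dimension at most $s-1$. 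Rewriting the threshold $t \ge (D+k)/(m-s+1)$ as an upper bound on the error fraction $1 - t/(N-S)$ among the $N-S$ surviving positions, and substituting the chosen $D$, reproduces the expression in \eqref{err-frac} with $N$ replaced by $N-S$.

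No genuinely new ideas are required; the main task is the bookkeeping to verify that the substitution $N \mapsto N-S$ propagates consistently through the choice of $D$, the root-counting bound, and the resulting decoding radius. The fact that erasures strictly weaken the interpolation system, and that the root-counting step is naturally carried out over non-erased blocks, means there is no conceptual obstacle; the subspace of dimension at most $s-1$ output by the algorithm thus contains every encoding polynomial $f$ compatible with the erasure pattern and the prescribed error fraction.
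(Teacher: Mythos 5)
Your proposal is correct and follows essentially the same route as the paper's Appendix proof: drop the interpolation constraints at erased positions, replace $N$ by $N-S$ in the degree parameter $D$, verify by dimension counting that a nonzero $Q$ still exists, and observe that the root-counting step over non-erased blocks yields the decoding radius of \eqref{err-frac} with $N\mapsto N-S$. No substantive difference from the paper's argument.
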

We provide the proof of Lemma \ref{lem_appendix} in the Appendix.

While list-decoding allows for correcting more errors as compared to unique decoding, it is often required to output a unique codeword in certain applications. If the decoder has access to some noise-free side information, then it can use it to prune the list and output a unique codeword. 
This problem was studied in \cite{guruswami2003list} in a communication setup. More specifically, a probabilistic scheme was proposed in \cite{guruswami2003list} in which the transmitter sends a random error-free symbol generated by a random hash function along with this hash function as a side information. The receiver then checks whether there is a unique message in the list that is consistent with the side information or not. If yes, the receiver outputs this message. Otherwise, the receiver declares a decoding failure. 

\noindent Guruswami and Wang \cite{guruswami2013linear} also developed an alternative linear-algebraic list-decoding approach with side information for derivative codes. In this approach, a subspace of candidate polynomials is first determined. The unique message can be then found with high probability by pruning this subspace using the side information. The main advantage of this approach compared to the hashing approach developed in \cite{guruswami2003list} is that the decoder does not need to compute the full list, which may have an exponential size in $s$, and then prune it to get the unique solution. Kopparty \emph{et. al} \cite{kopparty2018improved}  also provide a randomized decoding algorithm for FRS codes that returns a constant-size list which is much smaller than the number of candidate polynomials in the subspace returned by the approach of Gursuwami-Wang described in \Lref{lin-alg-list-decoding}. As a subroutine, one can prune the list by looking at one random symbol of FRS code, i.e., $(f(a),f(a\gamma), \cdots, f(a \gamma^{m-1}))$ for a random $a \in \Fq^*$, and recover the true polynomial with high probability. However, using a similar approach and following the result of \cite[Lemma\,12]{guruswami2016explicit}, one can show that the unique polynomial can be decoded successfully with a probability $p_d$ that is at least 
\be{Wronskin}
p_d \geq 1-\frac{k(s-1)}{q}, 
\ee
using  $s-1$ extra evaluations instead of $m$ evaluations. However, this approach requires the field size to be larger than a certain threshold, i.e., $q>k(s-1)$. To avoid such a constraint for the case where the field size is not relatively \emph{large},  the result of \cite[Lemma\,2]{saraf2011noisy} implies that by using a similar approach of \cite{kopparty2018improved} the unique codeword can be found with probability at least 
\be{Yekhanin}
p_d \geq 1-{t\choose l-1}\left(\frac{k}{n}\right)^{t-l+1},
\ee
 where $l<s$ is the dimension of subspace returned by the list decoding algorithm provided that $t$ extra evaluations $f(a_1), f(a_2) \cdots, f(a_t)$ for randomly picked $a_i \in \Fq^*$ with $a_i\neq a_j$ for all $i,j \in [t]$ are available at the decoder. 

In this work, we provide a new lower bound on the probability of  successful decoding of FRS codes using at most $s-1$ extra evaluations of $f(\cdot)$ and compare our result with the aforementioned existing works in Section\,\ref{sec:listdecoding-results}.

\subsection{Lagrange coded computing}\label{Pre_Lagrange}
Consider a coded computing setup consisting of a master node and a set of $N$ workers. We consider the worst-case adversarial model with up to $A$ computationally-unbounded Byzantine (or malicious) adversarial workers, up to $S$ stragglers and a privacy model where any set of up to $T$ workers can collude. 
Let  $(\bX_1, \cdots, \bX_K)$ denote a batch of $r \times h $ matrices over $\Fq$. The goal is to compute a $D_2$-degree polynomial function $g(\cdot): \Fq^{r \times h} \rightarrow \Fq^{r' \times h'}$ , over this dataset, i.e.,  $g(\bX_i)$  for all $i \in [K]$, where $r'$ and $h'$ are the dimensions of the output matrix. More specifically, we say $g(\cdot)$ is a $D_2$-degree polynomial function if all entries of the output matrix are multivariate polynomial functions of the entries of the input with total degree at most $D_2$, i.e., $\bY=g(\bX)$ implies that 
\be{polynomial-def}
y_{ij}=g_{ij}(x_{11},x_{12},\cdots, x_{rh}),
\ee
where $y_{ij}$ is the $(i,j)$-th entry of $\bY$, for $i \in [r']$ and $j \in [h']$, $x_{lk}$ is the $(l,k)$ entry of $\bX$, for $l \in [r]$ and $k \in [h]$, and, $g_{ij}$ is a multivariate polynomial of total degree at most $D_2$. Let $E\,\deff\, \{ \alpha_1, \cdots, \alpha_N\}$ for some distinct elements $\alpha_1, \cdots, \alpha_N \in \Fq$ and $I\,\deff\,\{\beta_1, \cdots, \beta_{K+T}\}$ for some other distinct elements $\beta_1, \cdots, \beta_{K+T} \in \Fq$. The sets $E$ and $I$ are referred to as the set of evaluation points and  the set of interpolation points, respectively. Note that $E$ and $I$ do not intersect, i.e., $E \cap I=\emptyset$. 
  The underlying encoding polynomial in LCC is the Lagrange interpolation polynomial of degree $D_1=K+T-1$ constructed as
\be{Lagrange-polynomial}
u(z)=\sum_{j=1}^{K} \bX_j \ell_j(z)+\sum_{j=K+1}^{K+T} \bZ_{j} \ell_j(z),
\ee
where $\bZ_j$'s for $j \in \{K+1, \cdots, K+T\}$ are random matrices whose entries are independent and uniformly distributed over $\Fq$ and $\ell_j(z)$'s are called Lagrange monomials specified as
\be{Lagrange-monomials} 
\ell_j(z)=\prod_{l\in [K+T]\setminus \{j\}} \frac{z-\beta_l}{\beta_j-\beta_l},
\ee 
for $j \in [K+T]$. The master node offloads the coded matrix $\tilde{\bX_i}\deff u(\alpha_i)$ to the worker node $i$ whose task is to compute $\tilde{\bY_i}\deff g(u(\alpha_i))=g(\tilde{\bX_i})$. The composed polynomial $f(z)\,\deff\, g(u(z))$  can be recovered provided that at least $D_1 D_2+2A+1$ workers return their computation results, i.e., $\tilde{\bY_i}$'s, to the master node. This can be done by invoking  Berlkamp-Welch (BW) decoder \cite{blahut2008algebraic} individually for all entries of the output matrix of the polynomial function $f(z)$. Note that BW decoder can successfully reconstruct a polynomial of degree $D_1D_2$ provided that $D_1D_2+2A+1$  evaluations of the polynomial are available and  up to $A$ of the evaluations can be erroneous.  Note that since $\bX_i=u(\beta_j)$ for all $j \in [K]$, then $f(\beta_j)=g(u(\beta_j))=g(\bX_j)$.  The master then evaluates $g(\cdot)$ over the interpolation points $\beta_1, \cdots, \beta_K$ to recover the desired computation outcome, i.e., $g(\bX_1), \cdots, g(\bX_K)$. We say that LCC is $S$-resilient and $A$-secure if it is robust against $S$ stragglers and $A$ Byzantine adversaries, respectively, and that it is $T$-private if any set of size up to $T$ workers remain oblivious to the content of dataset. 
It is shown in \cite{yu2019lagrange} that the number of required workers for an $S$-resilient, $A$-secure, and $T$-private LCC to compute $\{g(\bX_i)\}_{i=1}^{K}$ for a $D_2$-degree polynomial $g(\cdot)$ is lower bounded as
\be{Lagrange-lowerbound}
N \geq (K+T-1)D_2+S+2A+1,
\ee
where $ D=(K+T-1)D_2$ is the degree of the composed polynomial $f(\cdot)$ to be interpolated at the master node during the decoding step. The lower bound provided in \eqref{Lagrange-lowerbound} implies that tolerating Byzantine adversaries in LCC is twice as costly as stragglers, i.e., the additional number of workers required to tolerate each Byzantine worker is equal to what is needed to tolerate two stragglers.

\section{System Model}
\label{Section: System Model}
We consider a coded computing setup consisting of a master node and a set of $N$ workers. The goal is to design a coded computing scheme that is $S$-resilient, $A$-secure and $T$-private, where we consider the worst-case adversarial model with up to $A$ computationally-unbounded Byzantine adversarial workers. Unlike LCC \cite{yu2019lagrange}, where the master node does not do any computations except encoding and decoding, we allow the master node to do a tiny amount of extra computations that is negligible compared to the computations of each worker. Specifically, the normalized extra computation cost of the master node with respect to the workload of each worker must go to $0$ as the workload of each worker grows.  

A well-known class of coded computing schemes  extensively studied in the literature for this setup employs polynomial evaluations to encode data. We refer to them as \emph{polynomial-based} coded computing schemes. In such schemes, the shares sent to the worker nodes are evaluations of a certain polynomial over a finite field $\Fq$. The worker nodes perform a predefined computation task over their share(s), e.g., polynomial evaluation, matrix multiplication, etc., and return the results to the master node. The master node then follows a decoding process involving a polynomial interpolation to recover the overall computation outcome. We denote the $D$-degree polynomial to be interpolated at the decoding step by $f(\cdot)$.
 Such a class of coded computing schemes includes, but is not limited to, LCC \cite{yu2019lagrange}, polynomial codes \cite{yu2017polynomial} and MatDot codes \cite{dutta2019optimal}.
 It is well-known that the polynomial $f(\cdot)$ can be uniquely recovered provided that up to $(N-D-1)/2$ evaluations, out of $N$ available evaluation points, are erroneous. This can be done by utilizing efficient Reed-Solomon decoding algorithms at the master node. This, in a high level, imposes a limit on the maximum number of Byzantine workers that can be tolerated in prior works on polynomial-based coded computing. In this work, we break this barrier by employing folded Reed-Solomon (FRS) codes instead of RS codes often used in the polynomial-based schemes together with leveraging their list-decoding algorithms instead of the unique decoding algorithm for RS codes. The output of the list decoder is then pruned using a tiny amount of side information to recover the unique computation outcome. The computation cost of this side information is negligible compared to the computation cost of each worker. This will be clarified more later in Section\,\ref{Section: Folded Lagrange Coded Computing}. 

 In order to illustrate the key idea of our method more explicitly, we consider the LCC scheme \cite{yu2019lagrange} described earlier in Section\,\ref{Pre_Lagrange}.  
We propose a variant of LCC, referred to as folded LCC (FLCC), inspired by the folded RS code construction. Consider $m$ batches of size $K$ which can be considered as a larger batch of size $k \,\deff\, mK$, i.e., $(\bX_1, \cdots, \bX_{k})$, where the goal is to compute $g(\bX_i)$ for all $i \in [k]$. The parameter $m$ is an arbitrary integer, referred to as the folding parameter.
We also assume $ N= (K+T-1)D_2+S+2A+1$ which is the minimum number of workers required in LCC. Let $E_m \,\deff\,\{\alpha_i=\alpha^{i-1} |i \in [N]\}$ denote the set of evaluation points in our proposed scheme, where $\alpha$ is a primitive element of $\Fq$ and $q > N$.
Note that the evaluation points here are picked more specifically compared to those of LCC, but that does not impose any limitations on the scheme. Furthermore, instead of the polynomial in \eqref{Lagrange-polynomial}, we construct the following encoding polynomial
 \be{folded-Lagrange-polynomial}
u_m(z)=\sum_{j=1}^{Km} \bX_j \ell_j(z)+\sum_{j=Km+1}^{(K+T)m} \bZ_{j} \ell_j(z),
\ee
where $\bZ_j$'s for $j \in \{Km+1, \cdots, (K+T)m\}$ are random matrices whose entries are independent and uniformly distributed over $\Fq$ and $\ell_j(z)$'s are Lagrange monomials defined as
\be{Lagrange-monomials} 
\ell_j(z)=\prod_{l\in [(K+T)m]\setminus j} \frac{z-\beta_l}{\beta_j-\beta_l},
\ee
where $I_m\,\deff\, \{\beta_1, \cdots, \beta_{m(K+T)}\}$ for distinct $\beta_i$'s  belonging to $\Fq$ such that $E_m \cap I_m= \emptyset$. We refer to this scheme as FLCC. In FLCC, the share of encoded data sent to the worker $i$ consists of the evaluations of $u_m(\cdot)$ over the points $\alpha_{m(i-1)+1} \cdots, \alpha_{mi} $, i.e., $u_m(\alpha_{m(i-1)+1}), \cdots, u_m(\alpha_{mi})$. Let $f_m(z) \,\deff\, g(u_m(z))$ denote the composed polynomial to be interpolated at the decoder in our scheme. The task of each worker node is then to compute $g(\cdot)$ on all of its associated evaluations separately, i.e., node $i$ computes $f_m(\alpha_{m(i-1)+1}), \cdots, f_m(\alpha_{mi})$ and returns the results to the master node. 

Intuitively speaking, the encoding polynomial considered in our scheme is similar to that of LCC in which $m$ batches of data are regarded as a single dataset of size $m$-times larger. In other words, in the encoding step of our protocol, we first encode the data according to the RS encoder and the coded symbols are then \emph{folded} with parameter $m$, resembling the FRS encoding procedure. Consequently, we can apply list-decoding algorithms developed for FRS in the literature to gain a better resiliency-security-privacy trade-off in FLCC compared to that of LCC. A linear-algebraic list-decoding algorithm for FRS codes is considered in the next section along with our proposed methods to improve it when certain side information is available at decoder. We then discuss in Section\,\ref{Section: Folded Lagrange Coded Computing} how this result can be utilized to improve upon the performance of LCC decoder in terms of the achievable triples of $(S, A, T)$.

\section{List decoding FRS codes with side information}\label{sec:listdecoding-results}

In this section, we provide our approach to adapt list-decoding techniques to make the polynomial-based coded computing protocols more \emph{robust} against malicious adversaries. This is done in such a way that it can be used as a \textit{black box} and regardless of the technical details associated with the encoder and decoder of the underlying coded computing scheme. We then illustrate in Section \ref{Section: Folded Lagrange Coded Computing} how the proposed techniques can be applied to LCC.
\\
In particular, we consider the linear-algebraic FRS list decoder introduced in \cite{guruswami2011linear}. Let $\mathcal{W}$ denote the linear space of univariate polynomials of degree at most $k-1$ over $\Fq$ and $\mathcal{L}$ denote the list of candidate polynomials at the output of the list decoder that can be represented by an affine subspace $U$. The elements of $U$ can be represented as $\bff=\bM\bx+\bz$ for $\bx\in \Fq^l$, where $l <s$, $\bM \in \Fq^{k\times l} $ and $\bz \in \Fq^k$. The vector $\bff =(f_0, f_1, \cdots, f_{k-1})^{\mathrm T}$ denotes the coefficients of the corresponding polynomial in $U$. It is also shown in \cite{guruswami2011linear} that $\bM$ can be assumed to have $l \times l$ identity matrix $\bI_{l}$ as a submatrix, without any extra computation. Note that the location of the identity submatrix is not known prior to applying the list-decoding algorithm at the decoder. 

Now, suppose that the decoder of the FRS code can request to have access to $l<s$ additional error-free evaluations of $f(\cdot)$ as a side information. This can be done with the aim of pruning the output of the list-decoding algorithm specified in \Lref{lin-alg-list-decoding} to uniquely recover $f(\cdot)$. In the next theorem, we provide a result that this is always possible provided that the $l<s$ evaluation points can be decided after applying the list-decoding algorithm to the received word $\by$. 

\begin{theorem}\label{main-result}
	For the FRS code of length $N=\frac{n}{m}$ and rate $R=\frac{k}{n}$ and for all $s \in [m]$, the polynomial $f(\cdot)$ can be uniquely recovered if 
	\begin{enumerate}[i)]
		\item the received word $\by \in (\Fq^m)^N$ differs from the FRS codeword corresponding to $f(\cdot)$ in at most $\frac{s}{s+1}(1-\frac{mR}{m-s+1})$ fraction of the $N$ symbols and 
		\item up to $s-1$ additional evaluations of $f(\cdot)$ can be requested and are provided error-free, and assuming that the corresponding evaluation points can be decided after $\by$ is received. 
	\end{enumerate}
	Moreover, the entire algorithm is run with $O(n^2+sk^2)$ complexity. 
\end{theorem}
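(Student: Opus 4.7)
The plan is to first run the linear-algebraic FRS list decoder of \Lref{lin-alg-list-decoding} at cost $O(n^2 + sk^2)$ to obtain an affine subspace $U \subset \Fq^k$ of dimension $l \le s-1$ that is guaranteed to contain the coefficient vector $\bff$ of the true polynomial $f$. Writing the elements of $U$ as $\bff = \bM\bx + \bz$ with $\bM \in \Fq^{k\times l}$ and $\bz \in \Fq^k$, I would exploit the structural property recalled in the paragraph above the theorem, namely that $\bM$ may be chosen to contain the $l\times l$ identity as a submatrix at some row-set $\{i_1 < \cdots < i_l\} \subseteq \{0,\ldots,k-1\}$, which becomes known to the decoder at the end of the list-decoding step.

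Next I would reformulate each requested error-free evaluation of $f(\cdot)$ as a single linear equation in the unknown $\bx$. Substituting $\bff = \bM\bx + \bz$ into $f(a) = \sum_{t=0}^{k-1} f_t\,a^t$ yields, for any $a \in \Fq$, the identity $f(a) = \bh(a)^{\top}\bx + c(a)$, where $h_j(a) \deff \sum_{t=0}^{k-1} m_{tj}\,a^t$ and $c(a) \deff \sum_{t=0}^{k-1} z_t\,a^t$ are computable from the list-decoder output alone. Stacking $l$ such equations at points $a_1,\ldots,a_l$ gives $H\bx = \by'$ with $H_{ij} = h_j(a_i)$ and $\by'_i = f(a_i) - c(a_i)$; the system is automatically consistent with the true $\bx^*$ since the side information is error-free, so the whole problem reduces to choosing the $a_i$'s, after $\bM$ is known, so that $H$ is invertible. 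This is precisely where the assumption that the evaluation points may be selected after $\by$ is received enters.

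The hard part is therefore to show that such $a_1,\ldots,a_l \in \Fq$ always exist. The key observation is that $h_1,\ldots,h_l$ are linearly independent polynomials of degree less than $k$, since inspecting the coefficient of $X^{i_j}$ in any relation $\sum_j c_j h_j(X) \equiv 0$ forces $c_j = 0$ (because $m_{i_j, j'} = \delta_{jj'}$). I would then argue by induction on $l$: having chosen $a_1,\ldots,a_{l-1}$ so that $[h_j(a_i)]_{i,j\in [l-1]}$ is nonsingular, view $\phi(a) \deff \det[h_j(a_i)]_{i,j \in [l]}$ as a polynomial in the variable $a = a_l$; cofactor expansion along the last row writes $\phi$ as an $\Fq$-linear combination of $h_1,\ldots,h_l$ whose $h_l$-coefficient equals the nonzero inductive minor, so by linear independence $\phi$ is itself a nonzero polynomial of degree less than $k \le q$, and any of its at least $q-k+1$ non-roots --- automatically distinct from $a_1,\ldots,a_{l-1}$ --- supplies an admissible $a_l$. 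Beyond the list decoder itself, the $h_j$'s are just columns of $\bM$, each greedy selection costs $O(k^2)$ together with an $O(l^2)$ incremental determinant update, and the final solve of $H\bx = \by'$ costs $O(l^3)$, so the total stays within $O(n^2 + sk^2)$.
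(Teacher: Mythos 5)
Your proof is correct, and it reaches the same crux as the paper---showing that $l$ distinct evaluation points $a_1,\dots,a_l$ can always be chosen (after the decoder has output $\bM$) so that $\bV\bM$ is invertible, where $\bV$ is the $l\times k$ Vandermonde matrix on those points---but it establishes this fact by a genuinely different argument. The paper's published proof is shorter and more direct: it takes an arbitrary $n\times k$ Vandermonde matrix $\tilde{\bV}$, observes that $\tilde{\bV}\bM$ has full column rank $l$ (since $\tilde{\bV}$ is injective and $\bM$ has full column rank), forms the product via fast Vandermonde multiplication in $O(nl\log n)$, extracts an $l\times l$ nonsingular submatrix by Gaussian elimination in $O(nl^2)$, and takes the corresponding $l$ rows of $\tilde{\bV}$ as the side-information evaluation points. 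You instead work polynomial-by-polynomial: you interpret the columns of $\bM$ as polynomials $h_1,\dots,h_l$ of degree $<k$, use the identity submatrix of $\bM$ to certify their linear independence, and select the points greedily, arguing at each step via cofactor expansion that the leading minor of $[h_j(a_i)]$ is a nonzero polynomial of degree $<k$ in the new variable $a_l$ and hence has a non-root in $\Fq$. The paper's route buys a one-shot linear-algebra construction with no case analysis; yours buys a self-contained, root-counting argument that makes the existence of admissible evaluation points completely explicit without invoking the full-rank-product fact, and it happens to give the decoder a concrete iterative procedure. Both fit within the claimed $O(n^2+sk^2)$ budget, which is dominated by the list decoder of \Lref{lin-alg-list-decoding}.
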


\begin{proof}
	Let $\tilde{\bV} \in \Fq^{n \times k}$ be an arbitrary Vandermonde matrix. Let also the affine subspace $U$ that contains the true polynomial be represented by $\bff=\bM\bx+\bz$ for $\bx\in \Fq^l$, where $l <s$, $\bM \in \Fq^{k\times l} $ and $\bz \in \Fq^k$, as discussed above.  Then, $\tilde{\bV}\bM$ is full rank since both $\tilde{\bV}$ and $\bM$ are full rank and $n \geq k \geq l$. By using the fast multiplication algorithm available for Vandermonde matrices \cite{pan2001structured}, $\tilde{\bV}\bM$ can be formed in $O(nl\log n)$. Furthermore, an $l \times l$ full-rank submatrix of $\tilde{\bV}\bM$  can be found in time $O(nl^2)$ by Gaussian elimination. Then, the rows in $\tilde{\bV}$ associated to this $l\times l$ submatrix form an $l \times k$ Vandermonde matrix, namely,
	\be{Vandermonde-deff}
	\bV\,\deff\, \begin{bmatrix}
		1 & \lambda_1  & \cdots & \lambda_1^{k-1}\\
		1 & \lambda_2  & \cdots & \lambda_2^{k-1}\\
		\vdots & \vdots & \vdots & \vdots \\
		1 & \lambda_l  & \cdots & \lambda_l^{k-1}
	\end{bmatrix},
	\ee
	such that $\tilde{\bV}\bM$ is full rank. The evaluations of $f(\cdot)$ over the points $\lambda_1, \cdots, \lambda_l$ associated to the Vandermonde matrix suffice to uniquely recover $f(\cdot)$.  To see that, let $\by_e\,\deff\, \bV \bff$ denote the vector of extra evaluations of $f(\cdot)$ over $\lambda_1, \cdots, \lambda_l$. 
	Therefore, one can write 
	\be{extra-eqs}
	\by_e=\bV\bM\bx+\bV\bz.
	\ee
	This implies that if $\bV\bM$ is full rank, then \eqref{extra-eqs} can be solved for $\bx$ which is then utilized to determine $\bff$, thereby uniquely recovering the polynomial $f(\cdot)$. The computational complexity of the entire algorithm is dominated by that of the list-decoding algorithm described in \Lref{lin-alg-list-decoding} which is $O(n^2+sk^2)$ by noting that $l <s$. 
\end{proof}

The result of \Tref{main-result} implies that if the set of extra evaluation points $\Lambda\deff \{\lambda_1, \cdots, \lambda_l\}$ can be decided by the decoder after observing the entire received vector, the output subspace provided by the list-decoding algorithm, as specified in \Lref{lin-alg-list-decoding}, can be efficiently pruned to uniquely recover the polynomial $f(\cdot)$ in a deterministic fashion. 
This provides a pruning algorithm with a computational complexity that is dominated by that of the corresponding list-decoding algorithm. Note that in a communication setting, with the encoder and the decoder being separate entities, such an assumption on the side information would necessitate multiple rounds of communication that may not be desirable in practice. However, in coded computing settings, the encoding and decoding are both done by the same entity, i.e., the master node. Hence, obtaining error-free side information after the results are received from the workers does not impose any major hurdle to the protocol. This comes only at the cost of extra computation complexity and latency, which can be characterized and optimized based on the limitations of the master node.

In order to mitigate the latency of computing the side information, we propose an alternative probabilistic algorithm. In this algorithm, the master node does not have to wait till the results are received from the workers and can compute the side information in parallel to them. This is described in the following theorem. It is shown that the polynomial $f(\cdot)$ can be uniquely recovered using this algorithm with \emph{high probability} if the size of the underlying finite field $\Fq$ is large enough. Moreover, if the unique recovery is not possible, then the decoder can identify it as a decoding failure, i.e., the output of this decoding scheme with the probabilistic pruning algorithm is either the true outcome or a decoding failure, provided that the number of errors is bounded by a certain threshold.

To provide our result for the probabilistic scheme, we need the following definition and Lemma that are provided below.

\begin{definition}
	A vector $\bv \in \Fq^k $ is a called a Vandermonde-type vector, or a V-vector in short, if $\bv=(1,\lambda, \lambda^2,\cdots, \lambda^{k-1})^{\mathrm T} $ for some $\lambda \in \Fq \setminus \{0\}$. 
\end{definition}

The following lemma is used to prove our main result in this section. 
\begin{lemma}\label{lem_v}
	For any arbitrary matrix $\bA_{k\times l}\ (k>l)$ over $\Fq$ of rank $r\leq l$, there exist at most $r$ distinct V-vectors of length $k$ that lie in the column space of $\bA$.  
\end{lemma}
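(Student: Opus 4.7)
The plan is to proceed by contradiction. Suppose toward a contradiction that the column space $\cC(\bA)$ contains at least $r+1$ distinct V-vectors
\[
\bv_i = (1,\lambda_i,\lambda_i^2,\dots,\lambda_i^{k-1})^{\mathrm T}, \qquad i = 1,\dots,r+1,
\]
where the $\lambda_i \in \Fq\setminus\{0\}$ are pairwise distinct. Stacking them as columns forms a $k \times (r+1)$ matrix $\bV$ whose top $(r+1)\times(r+1)$ block is a classical Vandermonde matrix in the distinct nodes $\lambda_1,\dots,\lambda_{r+1}$. Note that $k > l \geq r$ guarantees $k \geq r+1$, so this top block actually exists.

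Next, I would invoke the well-known Vandermonde determinant identity to conclude that this top $(r+1)\times(r+1)$ block has determinant $\prod_{i<j}(\lambda_j-\lambda_i) \neq 0$, and hence that $\bV$ has full column rank $r+1$. Equivalently, the V-vectors $\bv_1,\dots,\bv_{r+1}$ are linearly independent over $\Fq$. But by hypothesis each $\bv_i$ lies in $\cC(\bA)$, and $\dim \cC(\bA) = r$ by the rank assumption, so $\cC(\bA)$ cannot contain $r+1$ linearly independent vectors. This is the desired contradiction, so the number of distinct V-vectors in $\cC(\bA)$ is at most $r$.

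I do not anticipate a significant obstacle: the only subtle point is verifying that the relevant Vandermonde block is genuinely $(r+1) \times (r+1)$, which uses $k \geq r+1$ coming from the assumptions $k > l$ and $r \leq l$. Everything else is a direct application of the Vandermonde determinant formula and a dimension-counting argument, so the proof should be short and self-contained.
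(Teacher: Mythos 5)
Your proof is correct and matches the paper's argument: both stack the putative $r+1$ distinct V-vectors into a $k\times(r+1)$ Vandermonde matrix, note it has full column rank, and derive a contradiction with $\dim\cC(\bA)=r$. Your only addition is spelling out that the top $(r+1)\times(r+1)$ block has nonzero Vandermonde determinant, a detail the paper leaves as ``well-known.''
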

\begin{proof}
	Assume to the contrary that there exist $r+1$ distinct V-vectors that lie in the subspace spanned by the columns of $\bA$, namely, $\bv_1, \cdots,\bv_{r+1}$. Let 
	$
	\bV_{k\times(r+1)}\,\deff\, \left[\bv_1| \bv_2| \hdots| \bv_{r+1} \right] 
	$, which is a Vandermonde matrix. Then, the column space of $\bV$ is a subspace of the linear space spanned by the columns of $\bA$. On the other hand, it is well-known that a Vandermonde matrix with distinct columns is always full rank, i.e., the column space of $\bV$ has dimension $r+1$ while $\text{rank}(A)=r$ which is a contradiction.
\end{proof}

\begin{theorem}\label{probability}
	For the FRS code of length $N=\frac{n}{m}$ and rate $R=\frac{k}{n}$ and for any $s \in [m]$, the  polynomial $f(\cdot)$ can be uniquely recovered with probability 
	 \be{Our_prob} p_d 
	  \geq \frac{\sum\limits_{i=l}^{t}{q-k+l-1\choose i}{k-l\choose t-i}}{{q-1\choose t}},
	 \ee 
	 where $l<s$ is the dimension of subspace returned by the list decoding algorithm, 
	 provided that
	\begin{enumerate}[i)]
		\item the received word $\by \in (\Fq^m)^N$ differs from the FRS codeword corresponding to $f(\cdot)$ in at most $\frac{s}{s+1}(1-\frac{mR}{m-s+1})$ fraction of $N$ symbols, and
		\item  the side information $f(\lambda_1), \cdots, f(\lambda_{t})$ are given to the decoder, where $\lambda_i$ is drawn uniformly at random from $\Fq \backslash\{0, \lambda_1, \cdots, \lambda_{i-1}\}$, for $i \in [t]$. 
	\end{enumerate}	 
	Furthermore, if the unique recovery is not possible, the decoder can identify it.  \end{theorem}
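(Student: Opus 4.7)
The plan is to leverage the structure already provided by \Lref{lin-alg-list-decoding} and the construction used in the proof of \Tref{main-result}. Applying the list-decoding algorithm to $\by$ returns an affine subspace $U$ of polynomials which may be written as $\bff = \bM\bx+\bz$ with $\bM\in\Fq^{k\times l}$ of rank $l$. Stacking the $t$ error-free evaluations into $\by_e$ and forming the $t\times k$ Vandermonde matrix $\bV_t$ whose $i$-th row is the V-vector $\bv_{\lambda_i}^{\mathrm T}$, the side information produces the linear system $\bV_t\bM\bx = \by_e-\bV_t\bz$. Since the true $\bff$ lies in $U$, this system is consistent; $f(\cdot)$ is uniquely decoded iff the $t\times l$ matrix $\bV_t\bM$ has rank $l$, and otherwise the decoder sees multiple consistent $\bx$'s and flags a decoding failure, which settles the last sentence of the theorem.

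The remaining task is to lower bound $\Pr[\rank(\bV_t\bM)=l]$ over the random choice of the $\lambda_i$'s. The plan is to track the rank process $r_j \deff \rank(\bV_j\bM)$ for $j=0,\ldots,t$, which is non-decreasing with $r_0=0$ and per-step increments in $\{0,1\}$. Let $W_{j-1}\subseteq\Fq^l$ denote the row space of $\bV_{j-1}\bM$, so $\dim W_{j-1}=r_{j-1}$, and set $P_{j-1}\deff(\bM^{\mathrm T})^{-1}(W_{j-1})$. Because $\bM^{\mathrm T}\colon\Fq^k\to\Fq^l$ is surjective with kernel of dimension $k-l$, one has $\dim P_{j-1}=k-l+r_{j-1}$, and the event $r_j=r_{j-1}$ coincides with $\bv_{\lambda_j}\in P_{j-1}$.

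\Lref{lem_v} is then invoked to control how many V-vectors can lie in $P_{j-1}$. The lemma extends, by the same Vandermonde-rank argument, to any subspace of $\Fq^k$ of dimension strictly less than $k$; applied to $P_{j-1}$, which has dimension $k-l+r_{j-1}<k$ whenever $r_{j-1}<l$, it yields at most $k-l+r_{j-1}$ V-vectors in $P_{j-1}$. Among these, $\bv_{\lambda_1},\ldots,\bv_{\lambda_{j-1}}$ are already present by construction, so at most $(k-l+r_{j-1})-(j-1)$ elements of $\Fq^\ast\setminus\{\lambda_1,\ldots,\lambda_{j-1}\}$ would leave the rank unchanged. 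Since $\lambda_j$ is uniform on this residual set of size $q-j$, I obtain
\begin{equation*}
\Pr[r_j=r_{j-1}+1\mid\text{history},\,r_{j-1}=r]\,\geq\,\frac{q-1-(k-l)-r}{q-j}
\end{equation*}
whenever $r<l$.

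The last step is a comparison with a hypergeometric urn: draw $t$ items without replacement from a population of $q-1$ containing $q-k+l-1$ \emph{blue} and $k-l$ \emph{red} items, and let $B_j$ be the running count of blues. Its per-step increment probability $(q-k+l-1-B_{j-1})/(q-j)$ matches the lower bound on the rank process exactly. The main obstacle is to pass from matching transition bounds to the distributional inequality $\Pr[r_t\geq l]\geq\Pr[B_t\geq l]$, despite the asymmetry that $r_j$ saturates at $l$ while $B_j$ may exceed it. The plan is a step-by-step coupling on a common uniform variable: as long as $r_{j-1}=B_{j-1}<l$, match the increments so that a blue draw forces a rank increment, keeping $r_j\geq B_j$ invariant; once $B$ first hits $l$, the invariant forces $r$ to have already hit $l$, after which $r$ stays at $l$ regardless. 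This yields $\Pr[r_t=l]\geq\Pr[B_t\geq l]$, and the right-hand side is precisely the hypergeometric tail $\sum_{i=l}^t\binom{q-k+l-1}{i}\binom{k-l}{t-i}/\binom{q-1}{t}$ stated in the theorem.
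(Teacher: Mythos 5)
Your proof is correct and reaches exactly the bound in \eqref{Our_prob}, but it takes a genuinely different and, in fact, more careful route than the paper. The paper's proof is ``one-shot'': it sets $\bN$ to a basis of $\ker(\bM^{\mathrm T})$, declares as \emph{bad} the at most $k-l$ V-vectors lying in $\Span{\bN}$ (by \Lref{lem_v}), counts the $t$-subsets with at least $l$ points outside this fixed bad set, and then asserts that any $l$ good points make $[\bN\,|\,\bV]$ full rank. You instead track the rank $r_j$ of $\bV_j\bM$ as the $\lambda_j$'s are drawn sequentially, observe that rank stagnation is exactly the event $\bv_{\lambda_j}\in P_{j-1}\deff(\bM^{\mathrm T})^{-1}(W_{j-1})$ with $\dim P_{j-1}=k-l+r_{j-1}$, invoke \Lref{lem_v} adaptively on this growing subspace, and couple the rank process to a hypergeometric urn. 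The payoff of the extra work is that your argument closes a gap the paper elides: having $l$ individual V-vectors that each avoid $\Span{\bN}$ does \emph{not} in general make $[\bN\,|\,\bV]$ full rank, since a nontrivial linear combination of those V-vectors can still lie in $\Span{\bN}$ (e.g.\ with $k-l=2$ and $\Span{\bN}$ the span of $e_2,e_3$ in $\Fq^4$, one has $\bv_{\lambda_1}-\bv_{\lambda_2}\in\Span{\bN}$ whenever $\lambda_1^3=\lambda_2^3$, even though neither $\bv_{\lambda_i}$ is in $\Span{\bN}$). Your adaptive bookkeeping---where the ``bad'' subspace is updated to $P_{j-1}$ at each step and the previously used $\lambda_i$'s are excluded from the residual pool---handles this correctly and, somewhat fortuitously, reproduces exactly the same hypergeometric tail, which explains \emph{why} the paper's formula is a valid lower bound even though the paper's own justification of the ``good $\Rightarrow$ full rank'' step is not airtight. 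The one place you should tighten the write-up is the coupling: state the invariant as $r_j\geq\min(B_j,l)$ rather than $r_j\geq B_j$ (so that once $B_j$ overshoots $l$ while $r_j$ is capped, the inequality you actually need is still maintained), and handle the case $r_{j-1}>B_{j-1}$ explicitly by noting that monotonicity of $r$ alone preserves the invariant there; the case $r_{j-1}=B_{j-1}<l$ is then the only one that needs the matched transition probabilities.
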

\begin{proof}
		Recall that by applying the linear-algebraic list-decoding algorithm, discussed in \Lref{lin-alg-list-decoding}, we have $\bff=\bM\bx+\bz$ for which $\bM$ has $\bI_l$ as its submatrix for some $l< s$. Then, without loss of generality, we can assume 
	\be{M-structure}
	\bM=\begin{bmatrix}
		\begin{array}{c}
			\tilde{\bM}_{(k-l)\times l }\\
			\hline
			\bI_{l }
		\end{array}
	\end{bmatrix},
	\ee
	where $\tilde{\bM}$ is an $(k-l)\times l$ matrix over $\Fq$ since the locus of identity submatrix is determined after the list-decoding algorithm is applied. Let $\bV$ and $\by_e$ be as characterized in \eqref{Vandermonde-deff} and \eqref{extra-eqs}, respectively. 

	Then, if $\bV\bM$ is full rank, \eqref{extra-eqs} can be solved for $\bx$ which is then utilized to determine $\bff$, thereby uniquely recovering the polynomial $f(\cdot)$. Furthermore, note that all the following are equivalent:
	\begin{align}
	\bV\bM \quad \text{is full rank.} &\Longleftrightarrow \quad \bM^{\mathrm T}\bV^{\mathrm T}\quad \text{ is full rank.}\\ & \Longleftrightarrow \quad \Span {\bV^{\mathrm T}} \cap \Span{\bN}=\{\boldsymbol{0}\},\label{equivalent}
	\end{align}
	where $\bN_{k \times (k-l)}$ is a matrix whose columns span the null-space of $\bM^{\mathrm T}_{l\times k }$, i.e., $\bM^{\mathrm T}\bN=\boldsymbol{0}$. Moreover, note that $\Span {\bV^{\mathrm T}} \cap\Span{\bN}=\{\boldsymbol{0}\}$, where $\boldsymbol{0}$ denotes the all-zero vector, if and only if $\left [ \bN_{k\times (k- l)} | \bV_{k\times l} \right ]$ is full rank. 
	According to the result of \Lref{lem_v}, there  exist at most $k-l$ V-vectors of length $k$ that lie in $\Span{\bN}$.  The  field elements that correspond to these vectors are referred to as \textit{bad} choices for $\lambda_i$'s. Hence, there are at most $k-l$ bad choices for $\lambda_i$'s in $\Fq^*$. Suppose that we randomly pick $t$ distinct $\lambda_i$'s from $\Fq^*$. The number of cases where at least $l$ out of $t$ random choices of $\lambda_i$'s do not fall into this set of size at most $k-l$ is equal to $\sum\limits_{i=l}^{t}{q-1-k+l\choose i}{k-l \choose t-i}$. In such cases, we set $\bV$ to be the corresponding Vandermonde matrix and $[\bN_{k\times (k-l)}|\bV]$ would be full rank.  There are ${q-1\choose t}$ different choices of $t$-subsets of $\Fq^*$. 
 Consequently, the probability of $[\bN_{k\times (k-l)}|\bV]$ being full rank   is at least 
 $\frac{\sum\limits_{i=l}^{t}{q-k+l-1\choose i}{k-l\choose t-i}}{{q-1\choose t}}$ 
 by
 picking $t$ extra evaluation points uniformly at random from $\Fq^*$.

	Note that in the event that $[\bN_{k\times (k-l)}|\bV]$ is not full-rank, the system of linear equations characterized in \eqref{extra-eqs} does not have a unique solution. In such a case, the decoder verifies the unique recovery is not possible and declares recovery failure.

\end{proof}

We now compare the lower bound on $p_d$ provided in  \Tref{probability}  with \eqref{Wronskin} and \eqref{Yekhanin} that  follow from  \cite[Lemma\,12]{guruswami2016explicit} and {\cite[Lemma\,2]{saraf2011noisy}}, respectively, as described in Section\,\ref{Pre_FRS}. The lower bound provided in \eqref{Wronskin} is compared with our result in Figure\,\ref{WronskinVSus} for $k=1000$ and $k=10000$. We set $t=s-1$ in our proposed scheme to have a fair comparison with the lower bound provided in \eqref{Wronskin}. For $k=1000$, our result is almost the same as what is guaranteed by \eqref{Wronskin} as $k(s-1)$ is well below the field size $q=10003$. For the case where $k=10000$, the lower bound in \eqref{Wronskin} is trivial for $s\geq10$ where our result still provides a non-trivial bound on $p_d$. In other words, the lower bound characterized in \Tref{probability} is non-trivial for a wider range of encoding parameters. Moreover, our result allows having $t\geq s$ extra evaluation points and the lower bound on $p_d$ provided in \eqref{Our_prob} improves as the number of extra evaluations $t$ increases. Consequently, our result does not necessarily require $q\gg k(s-1)$ as one can improve the lower bound on $p_d$ with a few more extra evaluations. For instance, with $t=s+3$, i.e., having $4$ more evaluations, the lower bound provided in \Tref{probability} on $p_d$ significantly improves as demonstrated in Figure\,\ref{WronskinVSus}. This illustrates the superiority of our result over the one that  is established based on the result in  \cite[Lemma\,12]{guruswami2016explicit},  provided in \eqref{Wronskin}. Roughly speaking, the advantage of our approach over the one  established based on the result in  \cite[Lemma\,12]{guruswami2016explicit} is due to  the difference in how the extra evaluation points are picked. In our approach, we pick all extra evaluation points independently and uniformly at random from $\Fq^*$  where the extra evaluation points in the latter approach keeps the same structure as FRS symbols, i.e., they are equal to $a, a\gamma, \cdots, a\gamma^{s-1}$, for some random $a\in \Fq^*$. In a sense, we require more randomness than what is needed in the latter approach.   
\begin{figure*}[htb!]
    \begin{minipage}{0.45\textwidth}
        \centering
%

\begin{tikzpicture}

\definecolor{mycolor1}{rgb}{0.15,0.15,0.15}
\definecolor{mycolor2}{rgb}{0,0,1}
\definecolor{mycolor3}{rgb}{1,0,0}
\definecolor{mycolor4}{rgb}{0,0.45,0.74}
\definecolor{mycolor5}{rgb}{0.64,0.08,0.18}
\definecolor{mycolor6}{rgb}{0.2,0.6,0.2}
\definecolor{mycolor7}{rgb}{1,0 ,1}

\begin{axis}[
 x label style={at={(axis description cs:0.47,0)},anchor=north},
    y label style={at={(axis description cs:0.05,0.5)},rotate=0,anchor=south},
    xlabel={$s$},
    ylabel={Lower bound on $p_d$},
scale only axis,
every outer x axis line/.append style={mycolor1},
every x tick label/.append style={font=\color{mycolor1}},
every outer y axis line/.append style={mycolor1},
every y tick label/.append style={font=\color{mycolor1}},
width=2.75in,
height=2in,
xmin=0, xmax=31.0000,
ymin=0, ymax=1,
axis on top,
legend entries={\footnotesize{ \cite{guruswami2016explicit}, $k=1000$,}, \footnotesize{Ours, $k=1000$, $t=s$}, \footnotesize{Ours, $k=1000$, $t=s+4$,}, \footnotesize{\cite{guruswami2016explicit}, $k=10000$,},
\footnotesize{Ours, $k=10000$, $t=s$,},\footnotesize{Ours, $k=10000$, $t=s+4$.}},
legend style={ nodes={scale=.7, transform shape}, legend columns=2},
legend cell align=right,
legend style={at={(0.02,1.15)},anchor=west},
mark options={solid,scale=1.3}
]
\addplot [
color=mycolor2,
dashed,
mark=x,
line width=1.0pt,
]
coordinates{
 (1,0.99)
 (2,0.980001)
 (3,0.970001)
 (4,0.960001)
 (5,0.950001)
 (6,0.940002)
 (7,0.930002)
 (8,0.920002)
 (9,0.910003)
 (10,0.900003)
 (11,0.890003)
 (12,0.880004)
 (13,0.870004)
 (14,0.860004)
 (15,0.850004)
 (16,0.840005)
 (17,0.830005)
 (18,0.820005)
 (19,0.810006)
 (20,0.800006)
 (21,0.790006)
 (22,0.780007)
 (23,0.770007)
 (24,0.760007)
 (25,0.750007)
 (26,0.740008)
 (27,0.730008)
 (28,0.720008)
 (29,0.710009)
 (30,0.700009)

};
\addplot [
color=mycolor3,
dashed,
mark=o,
line width=1.0pt,
]
coordinates{
 (1,0.99)
 (2,0.9801)
 (3,0.970299)
 (4,0.960596)
 (5,0.95099)
 (6,0.94148)
 (7,0.932065)
 (8,0.922744)
 (9,0.913516)
 (10,0.90438)
 (11,0.895335)
 (12,0.886381)
 (13,0.877516)
 (14,0.86874)
 (15,0.860052)
 (16,0.85145)
 (17,0.842935)
 (18,0.834504)
 (19,0.826158)
 (20,0.817895)
 (21,0.809714)
 (22,0.801615)
 (23,0.793598)
 (24,0.78566)
 (25,0.777802)
 (26,0.770022)
 (27,0.76232)
 (28,0.754695)
 (29,0.747146)
 (30,0.739672)
};

\addplot [
color=mycolor5,
dashed,
mark=diamond,
line width=1.0pt,
]
coordinates{
 (1,1)
 (2,0.99994)
 (3,0.99986)
 (4,0.99976)
 (5,0.99964)
 (6,0.9995)
 (7,0.99934)
 (8,0.99916)
 (9,0.99896)
 (10,0.998741)
 (11,0.998501)
 (12,0.998241)
 (13,0.997961)
 (14,0.997662)
 (15,0.997342)
 (16,0.997003)
 (17,0.996643)
 (18,0.996264)
 (19,0.995865)
 (20,0.995446)
 (21,0.995007)
 (22,0.994549)
 (23,0.99407)
 (24,0.993572)
 (25,0.993054)
 (26,0.992516)
 (27,0.991958)
 (28,0.991381)
 (29,0.990784)
 (30,0.990167)

};

\addplot [
color=mycolor4,
dashed,
mark=square,
line width=1.0pt,
]
coordinates{
 (1,0.900003)
 (2,0.800006)
 (3,0.700009)
 (4,0.600012)
 (5,0.500015)
 (6,0.400018)
 (7,0.300021)
 (8,0.200024)
 (9,0.100027)
 (10,2.99991e-05)

};
\addplot [
color=mycolor5,
dashed,
mark=star,
line width=1.0pt,
]
coordinates{
 (1,0.900002)
 (2,0.810003)
 (3,0.729002)
 (4,0.656101)
 (5,0.59049)
 (6,0.531439)
 (7,0.478293)
 (8,0.430461)
 (9,0.387413)
 (10,0.348669)
 (11,0.313799)
 (12,0.282416)
 (13,0.254172)
 (14,0.228752)
 (15,0.205874)
 (16,0.185284)
 (17,0.166753)
 (18,0.150075)
 (19,0.135065)
 (20,0.121556)
 (21,0.109399)
 (22,0.0984566)
 (23,0.088609)
 (24,0.0797462)
 (25,0.0717699)
 (26,0.0645912)
 (27,0.0581305)
 (28,0.052316)
 (29,0.0470831)
 (30,0.0423735)

};
\addplot [
color=mycolor6,
dashed,
mark=triangle,
line width=1.0pt,
]
coordinates{
 (1,0.99999)
 (2,0.999885)
 (3,0.999684)
 (4,0.999329)
 (5,0.998752)
 (6,0.997871)
 (7,0.9966)
 (8,0.99485)
 (9,0.992531)
 (10,0.989559)
 (11,0.985854)
 (12,0.981345)
 (13,0.975969)
 (14,0.969676)
 (15,0.962427)
 (16,0.954193)
 (17,0.94496)
 (18,0.934724)
 (19,0.923491)
 (20,0.91128)
 (21,0.898116)
 (22,0.884037)
 (23,0.869085)
 (24,0.853311)
 (25,0.83677)
 (26,0.819522)
 (27,0.801631)
 (28,0.783164)
 (29,0.764187)
 (30,0.744772)

};

\end{axis}

\end{tikzpicture}
        \caption{ \small Comparison of the lower bounds provided in \eqref{Wronskin} and \eqref{Our_prob} for $k=1000$ and $k=10000$. The field size is $q=100003$. Note that in both case, our proposed lower bound  improves significantly by adding $4$ more evaluations, i.e., $t=s+3$.    }\label{WronskinVSus}
    \end{minipage}\hfill
    \begin{minipage}{0.47\textwidth}
    \vspace{2.7mm}
        \centering
%

\begin{tikzpicture}

\definecolor{mycolor1}{rgb}{0.15,0.15,0.15}
\definecolor{mycolor2}{rgb}{0,0,1}
\definecolor{mycolor3}{rgb}{1,0,0}
\definecolor{mycolor4}{rgb}{0,0.45,0.74}
\definecolor{mycolor5}{rgb}{0.64,0.08,0.18}
\definecolor{mycolor6}{rgb}{0.2,0.6,0.2}
\definecolor{mycolor7}{rgb}{1,0 ,1}

\begin{axis}[
 x label style={at={(axis description cs:0.47,0.0)},anchor=north},
    y label style={at={(axis description cs:0.05,0.5)},rotate=0,anchor=south},
    xlabel={$t$},
    ylabel={Lower bound on $p_d$},
scale only axis,
every outer x axis line/.append style={mycolor1},
every x tick label/.append style={font=\color{mycolor1}},
every outer y axis line/.append style={mycolor1},
every y tick label/.append style={font=\color{mycolor1}},
width=2.75in,
height=2in,
xmin=10, xmax=70,
ymin=0, ymax=1,
axis on top,
legend entries={\footnotesize{\cite{saraf2011noisy}}, \footnotesize{Ours, q=100003,},\footnotesize{Ours, q=10007.} },
legend style={ nodes={scale=.9, transform shape}, legend columns=3},
legend cell align=right,
legend style={at={(0.01,1.12)},anchor=west},
mark options={solid,scale=1.3}
]
\addplot [
color=mycolor6,
dashed,
mark=o,
line width=1.0pt,
]
coordinates{
 (35,0)
 (36,0.298578)
 (37,0.53656)
 (38,0.696367)
 (39,0.802639)
 (40,0.87267)
 (41,0.918429)
 (42,0.948091)
 (43,0.967175)
 (44,0.979367)
 (45,0.987105)
 (46,0.991984)
 (47,0.995043)
 (48,0.996949)
 (49,0.998131)
 (50,0.998861)
 (51,0.999308)
 (52,0.999582)
 (53,0.999748)
 (54,0.999849)
 (55,0.99991)
 (56,0.999946)
 (57,0.999968)
 (58,0.999981)
 (59,0.999989)
 (60,0.999993)
 (61,0.999996)
 (62,0.999998)
 (63,0.999999)
 (64,0.999999)
 (65,1)
 (66,1)
 (67,1)
 (68,1)
 (69,1)
 (70,1)

};



\addplot [
color=mycolor3,
dashed,
mark=x,
line width=1.0pt
]
coordinates{
 (10,0.905294)
 (11,0.994925)
 (12,0.999801)
 (13,0.999994)
 (14,1)
 (15,1)
 (16,1)
 (17,1)
 (18,1)
 (19,1)
 (20,1)
 (21,1)
 (22,1)
 (23,1)
 (24,1)
 (25,1)
 (26,1)
 (27,1)
 (28,1)
 (29,1)
 (30,1)
 (31,1)
 (32,1)
 (33,1)
 (34,1)
 (35,1)
 (36,1)
 (37,1)
 (38,1)
 (39,1)
 (40,1)
 (41,1)
 (42,1)
 (43,1)
 (44,1)
 (45,1)
 (46,1)
 (47,1)
 (48,1)
 (49,1)
 (50,1)
 (51,1)
 (52,1)
 (53,1)
 (54,1)
 (55,1)
 (56,1)
 (57,1)
 (58,1)
 (59,1)
 (60,1)
 (61,1)
 (62,1)
 (63,1)
 (64,1)
 (65,1)
 (66,1)
 (67,1)
 (68,1)
 (69,1)
 (70,1)

};

\addplot [
color=mycolor4,
dashed,
mark=star,
line width=1.0pt
]
coordinates{
 (10,0.35263)
 (11,0.701874)
 (12,0.89194)
 (13,0.967099)
 (14,0.991225)
 (15,0.997891)
 (16,0.999534)
 (17,0.999904)
 (18,0.999981)
 (19,0.999997)
 (20,0.999999)
 (21,1)
 (22,1)
 (23,1)
 (24,1)
 (25,1)
 (26,1)
 (27,1)
 (28,1)
 (29,1)
 (30,1)
 (31,1)
 (32,1)
 (33,1)
 (34,1)
 (35,1)
 (36,1)
 (37,1)
 (38,1)
 (39,1)
 (40,1)
 (41,1)
 (42,1)
 (43,1)
 (44,1)
 (45,1)
 (46,1)
 (47,1)
 (48,1)
 (49,1)
 (50,1)
 (51,1)
 (52,1)
 (53,1)
 (54,1)
 (55,1)
 (56,1)
 (57,1)
 (58,1)
 (59,1)
 (60,1)
 (61,1)
 (62,1)
 (63,1)
 (64,1)
 (65,1)
 (66,1)
 (67,1)
 (68,1)
 (69,1)
 (70,1)

};

\end{axis}

\end{tikzpicture}
\caption{\small  Comparison of the lower bounds provided in \eqref{Yekhanin} and \eqref{Our_prob}. The plots show the lower bounds on $p_d$ versus the number of extra evaluation points available at the decoder for $s=10$. Other parameters are $q=10007, 100003$ and $n=2000$.}\label{YekhaninVSus}
    \end{minipage}
\end{figure*}

As discussed in Section\,\ref{Pre_FRS}, the lower bound on $p_d$ provided in \eqref{Yekhanin} that follows from the results in {\cite[Lemma\,2]{saraf2011noisy}} is also improved when more than $s-1$ extra evaluations are available. Consequently, the large field size is not required for the lower bound on $p_d$ characterized in $\eqref{Yekhanin}$ as in our result in \Tref{probability}. In Figure\,\ref{YekhaninVSus}, we compare the lower bound provided in \eqref{Yekhanin} with our result for two different field sizes $q=10007, 100003$. The bounds are plotted versus $t$, the number of extra evaluations available at the decoder, for $s=10$. It is illustrated that our result requires a significantly smaller number of extra evaluations to guarantee a reasonable successful decoding probability $p_d$. The advantage of our approach might be justified by noting that the result in {\cite[Lemma\,2]{saraf2011noisy}} is valid for any  linear code and not necessarily FRS codes. Intuitively, we leverage the certain structure of the encoding matrix of FRS codes to arrive at a better bound on the decoding probability compared to the one provided in {\cite[Lemma\,2]{saraf2011noisy}} for a general linear code.

\noindent 

\section{Folded Lagrange Coded Computing}

\label{Section: Folded Lagrange Coded Computing}
In this section, we demonstrate how the FRS list-decoding algorithm together with the pruning algorithms proposed in Section\,\ref{sec:listdecoding-results} can be utilized to break the barrier on the number of Byzantine workers that can be tolerated in LCC.

Let the parameters $m$, $N$ and $K$ be associated with our proposed FLCC, specified in Section\,\ref{Section: System Model}. The main result of this section is that the lower bound on $N$ in FLCC can be well-approximated by $(K+T)D_2+A+S-1$ for sufficiently large, but fixed, folding parameter $m$. This implies that Byzantine adversaries are \emph{as costly as} stragglers in terms of the number of additional workers required in FLCC reducing their effect by a factor of $2$ compared to LCC. 

 Similar to LCC, the decoding algorithm is performed over the received matrices element-wise. In theory, one can apply the decoding algorithm for FRS codes with side information as discussed in Section\,\ref{sec:listdecoding-results} individually for all $r'\times h'$ elements of the returned matrices. In the rest of this Section, we assume that all the steps discussed below are performed element-wise on the entries of   $\tilde{\bY_i}$'s individually. In practice, one might be able to perform the algorithm collaboratively on all entries at once and amortize the computational cost over all entries. A similar approach is  provided recently in  \cite{subramaniam2019collaborative} to decode polynomial codes by utilizing a collaborative decoder for interleaved generalized RS codes. It is shown that up to $N-K-1$ Byzantine workers can be tolerated under the additive Gaussian model with \emph{high} probability. The main difference between our proposed method and that of \cite{subramaniam2019collaborative} is that we do not make assumptions on the error model and its probability distribution. That is, we consider the worst-case adversarial model.

  In FLCC, the master node first finds the linear subspace $U$ of dimension at most $s-1$ containing the legitimate polynomial by applying the linear-algebraic list-decoding algorithm. Then it  determines the extra evaluation points needed to uniquely identify $f(\cdot)$ in $U$ according to the procedure described in the proof of \Tref{main-result}. The master node then performs extra computations to evaluate $f(\cdot)$ over these points which results in uniquely determining $f(\cdot)$ by solving the system of linear equations in \eqref{extra-eqs}. Let 
$$
r\,\deff\,\frac{(K+T-\frac{1}{m})D_2+1}{N-S},
$$
which is referred to as the \emph{modified rate} of FLCC. The fraction of adversaries tolerated in FLCC is characterized in the following theorem. 

\begin{theorem}\label{FLCC-decoder-nodes}
An FLCC with folding parameter $m$ and the number of worker nodes equal to $N$ is $S$-resilient, $A$-secure and $T$-private to compute $\{g(\bX_i)\}_{i=1}^{mK}$ for a $D_2$-degree polynomial $g(\cdot)$ as long as 
\be{FLCC-bound}
\frac{A}{N-S} \leq \frac{s^*}{s^*+1}(1-\frac{mr}{m-s^*+1}),
\ee
 where $s^*$ is equal to $\ceil{\tilde{s}}-1$ or $\ceil{\tilde{s}}$, where $\tilde{s}=	\frac{\sqrt{m(m+1)(m(1-r)+2)r}-(m+1)}{mr-1}$, depending on which one results in a larger RHS in \eqref{FLCC-bound}. 

\end{theorem}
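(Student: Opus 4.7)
The plan is to view FLCC as a direct instance of the FRS list-decoding framework of \Lref{lin-alg-list-decoding}--\Lref{lem_appendix}, followed by the deterministic side-information pruning of \Tref{main-result}.

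First, I would note that in FLCC the $i$-th worker returns the tuple $\bigl(f_m(\alpha^{m(i-1)}),\ldots,f_m(\alpha^{mi-1})\bigr)$, which is precisely the $i$-th folded symbol of the code $\mathrm{FRS}_q^{(m)}[mN,k]$ of Definition~\ref{FRS-deff} with $k=((K+T)m-1)D_2+1$, since $f_m(z)=g(u_m(z))$ has degree exactly $((K+T)m-1)D_2$. The $S$ straggling workers correspond to $S$ erased folded symbols, while the $A$ Byzantine workers contribute at most $A$ erroneous folded symbols in the received word.

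Second, I would apply \Lref{lem_appendix} to the $N-S$ non-erased folded symbols with any $s\in[m]$. That lemma guarantees that the linear-algebraic list decoder of \Lref{lin-alg-list-decoding} outputs an affine subspace $U$ of dimension at most $s-1$ that contains $f_m$, provided that the fraction of erroneous received symbols does not exceed $\tfrac{s}{s+1}\bigl(1-\tfrac{k/(N-S)}{m-s+1}\bigr)$. A direct calculation with $k=((K+T)m-1)D_2+1$ gives $k/(N-S)=mr-\tfrac{m-1}{N-S}\le mr$, so the hypothesis $A/(N-S)\le\tfrac{s}{s+1}\bigl(1-\tfrac{mr}{m-s+1}\bigr)$ of \eqref{FLCC-bound} is sufficient for the list-decoding step to succeed; the slack between $k/(N-S)$ and $mr$ is what yields the clean, $N-S$-independent closed form in the statement.

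Third, with $U$ in hand, I would invoke \Tref{main-result}: the master deterministically picks $l<s$ extra evaluation points $\lambda_1,\ldots,\lambda_l$ via the Vandermonde-submatrix construction in that proof, computes the $l$ corresponding values of $f_m$, and solves the linear system to uniquely identify $f_m$ inside $U$. The master's extra work is at most $s-1\le m-1$ evaluations of $g$ plus the $O(n^2+sk^2)$ decoding work, all of which are independent of $N$ and therefore negligible compared to the aggregate workload of the worker pool as $N$ grows. Once $f_m$ is recovered, the master reads off $f_m(\beta_j)=g(\bX_j)$ for $j\in[mK]$. The $T$-privacy is inherited from the LCC argument: $u_m(\cdot)$ carries $mT$ i.i.d.\ uniformly random matrix coefficients $\bZ_j$, and any $T$ colluding workers see exactly $mT$ evaluations of $u_m$, so the induced Vandermonde map from the random coefficients to the observations is a bijection for every fixed data, yielding perfect secrecy.

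Finally, to establish the advertised $s^*$, I would maximise $h(s):=\tfrac{s}{s+1}\bigl(1-\tfrac{mr}{m-s+1}\bigr)$ over $s\in[m]$. Treating $s$ as a continuous variable and setting $h'(s)=0$ produces a quadratic whose admissible positive root is $\tilde s=\tfrac{\sqrt{m(m+1)(m(1-r)+2)r}-(m+1)}{mr-1}$; since $s$ must be an integer in $[m]$ and $h$ is concave near $\tilde s$, the true maximiser is either $\lceil\tilde s\rceil-1$ or $\lceil\tilde s\rceil$, whichever makes the right-hand side of \eqref{FLCC-bound} larger. The main obstacle I anticipate is the careful parameter bookkeeping between $R=k/(mN)$, the erasure-effective quantity $k/(N-S)$, and the modified rate $r$, and verifying that the stationary-point computation for $h$ indeed reduces to the displayed quadratic root; the remaining steps are direct applications of \Lref{lem_appendix}, \Tref{main-result}, and standard one-variable optimisation.
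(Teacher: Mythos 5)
Your proposal is correct and follows essentially the same path as the paper's proof: set up the workers' returns as an $m$-folded RS code of message length $k=(m(K+T)-1)D_2+1$, treat the $S$ stragglers as erasures via \Lref{lem_appendix}, apply the linear-algebraic list decoder of \Lref{lin-alg-list-decoding} with $N$ replaced by $N-S$, prune the resulting affine subspace deterministically via \Tref{main-result} at a master cost of at most $s-1$ extra evaluations, and then maximise $a(s)=\tfrac{s}{s+1}(1-\tfrac{mr}{m-s+1})$ over integer $s\in[m]$ by finding the stationary point of its continuous relaxation. Your explicit bookkeeping of the slack between $k/(N-S)$ and $mr$ makes precise what the paper glosses over by simply ``replacing $R$ with the modified rate $r$,'' which is a nice touch. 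Two minor points the paper handles more carefully: it asserts \emph{global} concavity of $a(s)$ (needed to conclude that the integer maximiser is one of $\lceil\tilde s\rceil-1,\lceil\tilde s\rceil$; ``concave near $\tilde s$'' alone does not rule out a better integer elsewhere), and it explicitly checks that $\tilde s_+$ is infeasible and that $\tilde s_-$ lies in $(0,m]$ when $r\geq 1/m$, falling back to the boundary value $\tilde s=m$ otherwise. Your $T$-privacy argument matches the paper's.
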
  
\begin{proof}
	 We first note that the degree of the composed polynomial interpolated at FLCC decoder is $(m(K+T)-1)D_2$.
	 By the result of \Lref{lem_appendix}, we can use the result of \Lref{lin-alg-list-decoding} with $S$ straggling nodes by replacing $N$ with $N-S$, i.e., replacing $R$ in \eqref{err-frac} by the modified rate $r$. 
	 Let $a(s)\,\deff\, \frac{s}{s+1}(1-\frac{mr}{m-s+1})$. Then, the result of \Tref{main-result} implies that FLCC is $S$-resilient and $A$-secure as long as 
	 \be{frac-upperbound}
	 \frac{A}{N-S} \leq a(s),
	 \ee
	 for any arbitrary integer $s \in[m]$. Let $s^*\,\deff\,\underset{s\in [m]}{\argmax} \ a(s)$ and $\tilde{s}$ denote the solution to the same optimization problem with a difference that the underlying variable $s$ is assumed to be continuous, i.e., $\tilde{s} \,\deff\, \underset{s\in \R, 0\leq s\leq m}{\argmax} \ a(s)$. One can check that $a(s)$ is a concave function which implies that $s^*$ is either equal to $\ceil{\tilde{s}}-1$ or $\ceil{\tilde{s}}$, whichever maximizes $a(s)$ and also belongs to $[m]$. 
	 The concavity of $a(s)$ also implies that $\tilde{s}$ is a root of $\diff{a(s)}{s}$ or it is equal to one of the boundary values.
	 The roots of $\diff{a(s)}{s}$ are 
	 \be{roots}
	 \tilde{s}_{\pm}=\frac{\sqrt{m(m+1)(m(1-r)+2)r}\pm(m+1)}{mr-1}.
	 \ee
	 One can check that $\tilde{s}_+$ is not a feasible solution since it does not satisfy the constraints of the continuous optimization problem, i.e., it is always the case that $\tilde{s}_+ < 0$ or $m<\tilde{s}_+$. Furthermore, we have $0<\tilde{s}_-\leq m$ for $\frac{1}{m} \leq r$, and, $m<\tilde{s}_-\leq m+1$, otherwise. Note that for the latter case the boundary condition implies $\tilde{s}=m$. Then, at least one of $\ceil{\tilde{s}}-1$ and $\ceil{\tilde{s}}$ is always a feasible solution for the discrete optimization problem and, consequently, $ s^*$ is either equal to $\ceil{\tilde{s}}-1$ or $\ceil{\tilde{s}}$, whichever is feasible and returns a larger value for $a(s)$.
	  The proof of $T$-privacy is similar to the one in LCC by noting that both the number of random mask matrices in the encoding polynomial of FLCC, characterized in \eqref{folded-Lagrange-polynomial},  and the number of shares available at each worker node are larger than those in LCC by a multiplicative factor of $m$. Hence, the dataset is perfectly masked by $mT$ random matrices $\bZ_j$'s against any coalition of $T$ worker nodes, each having $m$ evaluations of $u_m(\cdot)$.     
\end{proof} 

In order to compare the performance of FLCC with LCC, we consider evaluating $g(\cdot)$ over $m$ batches of data where each batch contains $K$ input matrices, as explained in Section \ref{Section: System Model}. To this end, LCC is run $m$ times in the first scenario, each time computing $g(\cdot)$ over a single batch of matrices. Then, the total amount of computations performed at the master node for decoding is $m$ times the decoding complexity of running LCC once. More specifically, the overall decoding complexity when LCC is employed is $O(m(N-S)\log^2(N-S)\log\log(N-S)r'h')$. Furthermore, \eqref{Lagrange-lowerbound} implies that the maximum number of Byzantine workers tolerated in this scenario can be expressed as follows 
\be{LCC-adversaries} 
A_{\rm {LCC}}\,\deff\,\floor{\frac{N-(K+T-1)D_2-S-1}{2}}.
\ee 
In FLCC, the computations performed at the master node can be considered as two separate procedures. The first one is running the list-decoding algorithm and pruning it as well as solving \eqref{extra-eqs} to uniquely interpolate $f(\cdot)$, referred to as the interpolation step. The second one corresponds to computing evaluations of $g(\cdot)$ over the set of extra evaluation points returned by the underlying pruning algorithm used and is referred to as the extra computation step. The computational complexity of the interpolation step is $O((N^2m^2+m^2K^2s)r'h')$ according to \Tref{main-result}. In the extra computation step, the master node evaluates $g(\cdot)$ over at most $s^*-1$ points. This implies that the amount of extra computations performed over the master node normalized by the computational complexity of each worker node, referred to as \emph{normalized} extra computation, is at most $\frac{s^*-1}{m}$. Moreover, the computation load of the worker nodes in both FLCC and LCC is the same, i.e., each worker node evaluates $g(\cdot)$ over a batch of data consisting of $m$ matrices in either of these scenarios. Also, the amount of commutations required, referred to as the \emph{communication complexity}, in FLCC is equal to that of running LCC $m$ times as well. The result of \Tref{FLCC-decoder-nodes} implies that the number of adversaries tolerated in FLCC is expressed as follows
\be{FLCC-adversaries}
\small{A_{\rm{FLCC}}\,\deff\,\\ \floor{\frac{s^*}{s^*+1}\bigl(N-\frac{mD_2}{m-s^*+1}(K+T-\frac{1}{m})-S-1\bigr)},}
\ee
where $s^*$ is characterized in \Tref{FLCC-decoder-nodes}. Note that by setting $m=1$, \eqref{FLCC-adversaries} is reduced to \eqref{LCC-adversaries} since an FRS code with $m=1$ is an RS code, implying that LCC and FLCC are in fact identical for this special case, as expected. Note also that the decoding complexity in LCC and interpolation complexity in FLCC grow linearly with the dimensions of the output, i.e., $r'$ and $h'$, as in both schemes the decoding procedure must be performed element-wise for all the elements of the output matrix and also both are independent of the size of input matrices, i.e., $r \times h$, as well as the degree of polynomial function evaluated over the dataset, i.e., $D_2$. However, the decoding complexity in FLCC is quadratic in the number of worker nodes $N$ while it is \emph{almost} linear in $N$ in LCC. In Figure\,\ref{costvsm}, the normalized extra computation is plotted versus the folding parameter $m$, for a certain set of parameters. It is illustrated that the ratio of the computational cost of extra evaluations of $f(\cdot)$ at the master node to the workload of a worker node approaches zero as $m$ grows. In particular, by using the result of \Tref{FLCC-decoder-nodes}, it can be observed that this ratio approaches zero in $O(\frac{1}{\sqrt{m}})$.  For instance,  when $m=100$ and $N=1000$, one can observe that the amount of extra computations needed is less than $5\%$ of the computational task of each individual worker node or, equivalently,  is less than $0.005\%$ of the total computational job.   Figure\,\ref{gainvsm} demonstrates the advantage of FLCC over LCC by comparing the maximum number of adversaries tolerated in each scheme for the same set of parameters. The ratio of the maximum number of adversaries tolerated in FLCC to that of LCC is plotted versus the folding parameter $m$. It shows that FLCC can tolerate \emph{almost} twice as many as adversaries tolerated in LCC for the same parameters $N,S,K$ and $T$. 

\begin{figure*}
    \begin{minipage}{0.45\textwidth}
        \centering
        \vspace{-3mm}\includegraphics[width=\linewidth]{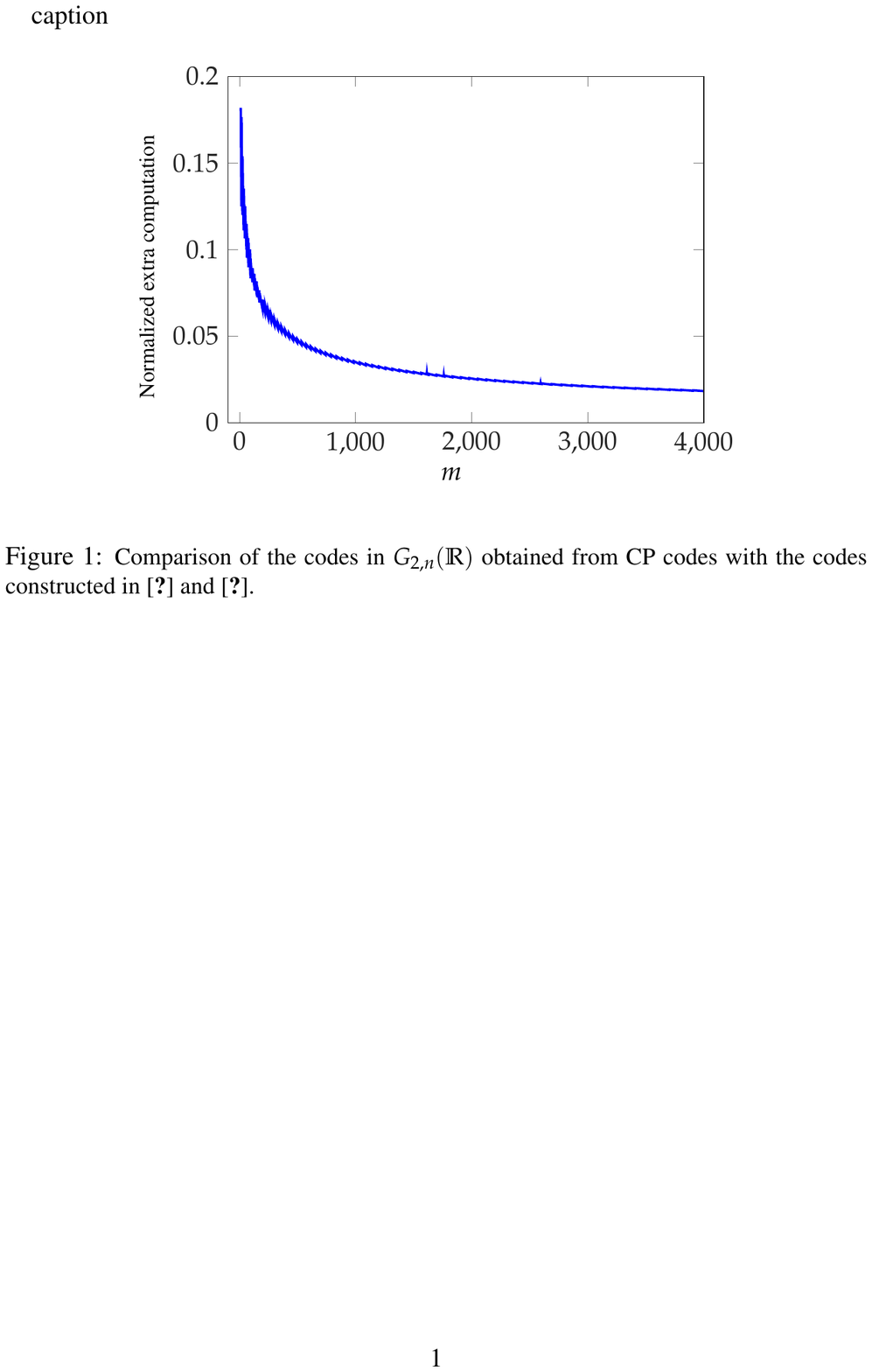}
        \caption{\small Demonstration of the ratio between the extra computations performed at the master node to the workload of each worker ($\frac{s^*-1}{m}$), referred to as \emph{normalized extra computation}, versus the folding parameter $m$ in FLCC. The relative computational cost of evaluating $f(\cdot)$ over the set of extra points at the master node approaches zero in $O(\frac{1}{\sqrt{m}})$. Other parameters are $N=1000, K=180, T=11, S=20,$ and $D=2$. }\label{costvsm}
    \end{minipage}\hfill
    \begin{minipage}{0.47\textwidth}
    \vspace{7.7mm}
        \centering
        \vspace{-19mm}\includegraphics[width=\linewidth]{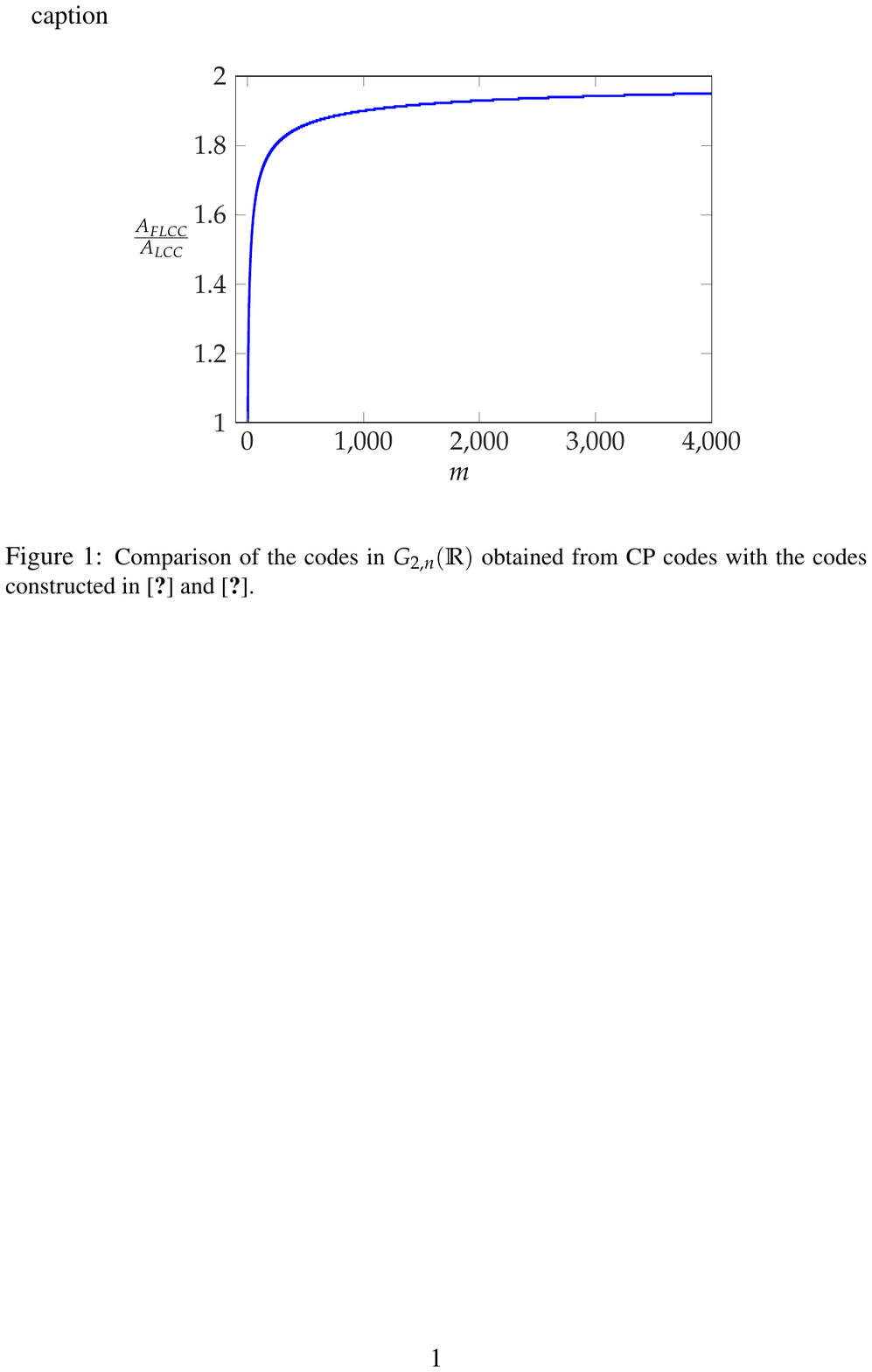}
\vspace{-5mm}\caption{\small Demonstration of the ratio of the number of Byzantine adversaries in FLCC to that of LCC for $N=1000, K=180, T=11, S=20,$ and $D=2$. The plot indicates that as $m$ grows, FLCC can tolerate \emph{almost} as twice as the number of adversaries in LCC with the same set of parameters. }\label{gainvsm}
    \end{minipage}
\end{figure*}

\noindent
The result of \Tref{FLCC-decoder-nodes} is simplified for large enough $m$ in the following corollary.
\begin{corollary}
    The FLCC specified in \Tref{FLCC-decoder-nodes} can tolerate up to 
\be{FLCC-bound-asymptotic}
A_{\rm{FLCC}}=\floor{(1-\epsilon)(N-S)-(K+T)D_2-1}
\ee
Byzantine adversaries for $m=O(\frac{1}{\epsilon^2})$ and $s^*=O(\frac{1}{\epsilon})$. 
\end{corollary}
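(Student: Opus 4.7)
The plan is to substitute a specific pair $(m,s^*)$ of the prescribed orders into the threshold of Theorem \ref{FLCC-decoder-nodes} and then simplify. Starting from $A\leq (N-S)\,\frac{s^*}{s^*+1}\bigl(1-\frac{mr}{m-s^*+1}\bigr)$ and using $r=\frac{(K+T-1/m)D_2+1}{N-S}$, I would first rewrite the right-hand side without fractions as
\[
A \;\leq\; \frac{s^*}{s^*+1}\Bigl[(N-S) \;-\; \frac{m(K+T)D_2 - D_2 + m}{m-s^*+1}\Bigr],
\]
so that the two quantities to control are the prefactor $\frac{s^*}{s^*+1}$ and the ratio $\frac{m}{m-s^*+1}$ (the shift by constants inside the numerator being absorbable later).

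Next I would pick $s^* = \lceil c_1/\epsilon\rceil$ and $m = \lceil c_2/\epsilon^2\rceil$ with constants $c_1,c_2$ (e.g.\ $c_1=2$, $c_2=4$) chosen so that two elementary bounds hold simultaneously for all sufficiently small $\epsilon$: $\frac{s^*}{s^*+1}\geq 1-\epsilon/2$ and $\frac{m}{m-s^*+1}\leq 1+\epsilon/2$. The second inequality uses that $(s^*-1)/m = O(\epsilon)$ under this scaling. I would also verify that such an $s^*$ is admissible in Theorem \ref{FLCC-decoder-nodes}: an asymptotic expansion of the formula for $\tilde{s}$ shows $\tilde{s} = \Theta(\sqrt{m})$ whenever $r$ stays bounded away from $0$ and $1$, so a value of order $1/\epsilon = \Theta(\sqrt{m})$ is within a constant of the true continuous optimum and lies well inside $[m]$, meaning the choice is feasible (the slight suboptimality in $a(s)$ is itself $O(\epsilon)$).

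Combining the two prefactor bounds with the displayed inequality and expanding yields
\[
A \;\leq\; (1-\epsilon/2)(N-S) \;-\; (1+O(\epsilon))\bigl[(K+T)D_2 + 1\bigr] \;+\; O(\epsilon),
\]
where the $O(\epsilon)$ remainders collect the contributions $D_2/m$ and $1/m$ from the numerator, together with the cross term between the two multiplicative slacks. Absorbing these lower-order corrections into the single $\epsilon$ slack (either by enlarging $c_1,c_2$ or by relabeling $\epsilon\mapsto\epsilon/C$ for a fixed constant $C$) produces the bound $A\leq (1-\epsilon)(N-S)-(K+T)D_2 - 1$, and taking the floor gives \eqref{FLCC-bound-asymptotic}.

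The main obstacle is purely the bookkeeping: one must confirm that every $O(\epsilon)$ correction term is genuinely of that order uniformly in the problem parameters $N, S, K, T, D_2$ and that the constants in the $\Theta$-orders for $s^*$ and $m$ can be chosen to handle them uniformly for all sufficiently small $\epsilon$. Once this is done, the argument is a direct substitution into Theorem \ref{FLCC-decoder-nodes}.
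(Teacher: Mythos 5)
Your proposal is correct and follows the same route the paper has in mind; the paper gives no explicit proof of this corollary and leans on the observation (stated after Lemma~\ref{lin-alg-list-decoding}) that taking $s\approx 1/\epsilon$ and $m\approx 1/\epsilon^2$ drives the Guruswami--Wang radius $\frac{s}{s+1}(1-\frac{mr}{m-s+1})$ up to $1-r-\epsilon$, after which substituting the modified rate $r=\frac{(K+T-1/m)D_2+1}{N-S}$ into Theorem~\ref{FLCC-decoder-nodes} gives the claimed expression. Your choice $s^*=\lceil 2/\epsilon\rceil$, $m=\lceil 4/\epsilon^2\rceil$ does indeed give $\frac{s^*}{s^*+1}\ge 1-\epsilon/2$ and $\frac{m}{m-s^*+1}\le 1+\epsilon/2$, and since the Theorem~\ref{FLCC-decoder-nodes} bound holds for \emph{every} $s\in[m]$ (not just the optimizer $s^*$), using this possibly suboptimal $s$ is legitimate for an achievability claim. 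One caution on presentation: your displayed chain is written as an upper bound on $A$, but the logical direction you actually need is a \emph{lower} bound on the right-hand side of \eqref{FLCC-bound} --- i.e., show that $\frac{s^*}{s^*+1}(1-\frac{mr}{m-s^*+1})(N-S)\ge (1-\epsilon)(N-S)-(K+T)D_2-1$ so that the proposed $A_{\rm FLCC}$ falls within the tolerable range --- and the absorption step implicitly uses $(N-S)-(K+T)D_2-1\ge 0$, which is exactly the regime in which the corollary is non-vacuous, so it is fine to assume. Also note that the $+D_2/m$ term you flag as a correction actually has the favorable sign and can simply be discarded rather than absorbed.
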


\begin{remark}
Note that \eqref{FLCC-bound-asymptotic} implies the marginal cost of tolerating one more Byzantine adversary in FLCC is one additional worker node, the same as that of tolerating one more straggler. This demonstrates the advantage of FLCC over LCC in which two additional worker nodes are needed to tolerate one more Byzantine adversary. In other words, FLCC improves the trade-off between the number of adversaries and stragglers can be tolerated while other parameters are fixed by removing the factor $2$ in \eqref{Lagrange-lowerbound}, thereby providing a scheme in which both adversaries and stragglers cost evenly, as opposed to LCC. 
\end{remark}

\begin{remark}
In this paper, we have assumed that the extra computations needed to obtain the side information are done at the master node. The numerical results shown in Figure\,\ref{costvsm} confirms the soundness of this approach in practice. However, for scenarios where even this tiny amount of extra computation must be avoided, the master node can employ a few trusted nodes that can perform this computation without error, e.g., by using software guard  extensions (SGX) technology implemented in Intel central processing units (CPU).
\end{remark}

The decoding algorithm provided for FLCC in this section always guarantees uniquely recovering the computation outcome, i.e., the computation result is deterministically provided by the decoder. This algorithm is established upon the novel deterministic pruning algorithm for FRS code characterized in \Tref{main-result} in Section\,\ref{sec:listdecoding-results}. In this algorithm, the extra evaluation points are determined \emph{after} the list-decoding algorithm is applied. In other words, it is assumed that the side information symbols are allowed to be constructed based on the output of the list-decoding algorithm. In a practical scenario where parallelization of tasks is preferred to reduce the latency, the side information can be specified simultaneously by the master node as the workers perform computations, as shown by \Tref{probability}. In this case, the evaluation of $f(\cdot)$ over $s^*-1$ points picked uniformly at random from $\Fq$ are provided as the side information to the list decoder. \Tref{probability} implies that the system of linear equations specified in \eqref{extra-eqs} has a unique solution with \emph{high} probability, establishing that each element of the computation outcome can be uniquely determined with the same probability.  

\section{Conclusion}
\label{sec:Conclusion}
In this work, we considered a coded distributed computing setting with a master node and  a set of workers. We proposed a coding-theoretic approach that boosts the adversarial toleration threshold in such systems. In particular, we adapted the folding technique in coding theory to the context of coded computing and leveraged the list-decoding algorithms for FRS codes for recovering the overall computation outcome at the master node. Furthermore, in order to guarantee unique recovery of the outcome, we proposed novel deterministic and probabilistic pruning algorithms for list-decoding FRS codes with side information that are of independent interest in the list-decoding literature. By utilizing our proposed techniques, we introduced the folded Lagrange coded computing (FLCC) protocol that outperforms LCC by improving the number of adversaries that can be tolerated \emph{almost} by a factor of two. More specifically, we showed that in FLCC adversaries and stragglers cost almost evenly in terms of the number of workers required, compared to LCC in which tolerating one adversary costs twice as overcoming one straggler.

\section*{Acknowledgment}

The authors would like to thank the anonymous reviewer for providing a simple and short proof for Theorem 1.

\bibliographystyle{IEEEtran}
\bibliography{ref}

\appendix

\textit{Proof of \Lref{lem_appendix}:}
     The first step in the algebraic list decoding algorithm when $S$ symbols are erased is to interpolate the following multivariate polynomial by using non-erased symbols:
\be{mult_pol}
Q(X,Y_1, \cdots, Y_s)=A_0(X)+A_1(X)Y_1+ \cdots +A_s(X)Y_s,
\ee
where $\deg(A_i)\leq D$ for $i=1,\cdots,s$, and  $\deg(A_0)\leq D+k-1$, with degree parameter  $D=\floor{\frac{(N-S)(m-s+1)-k+1}{s+1}}$.
Then, the number of unknown coefficients in $Q$ is $(D+1)s+D+k=(D+1)(s+1)+k-1>(N-S)(m-s+1)$. The interpolation requirements are
\be{interpolation_req}
Q(\gamma^{im+j}, y_{im+j}, \cdots, y_{im+j+s-1})=0
\ee
for all indices $i$ corresponding to the non-erased symbols and $j=0,1,\cdots, m-s$. Since $S$ symbols are erased, the number of interpolation conditions is $(N-S)(m-s+1)$ which is less than the number of unknown coefficients in $Q$. Hence, a solution for $Q(\cdot)$ can be found by solving a homogeneous linear system over $\Fq$ with the same complexity claimed in \Lref{lin-alg-list-decoding}. By using the interpolation requirements, one can show that $Q(X, f(X), f(\gamma X), \cdots, f(\gamma^{s-1}X))=0$ if $f(\cdot)$ is a polynomial of degree at most $k-1$ whose FRS encoding agrees with the received word in  $t>\frac{D+k-1}{m-s+1}$ locations. For our choice of $D$, the requirement on $t$ is met if $t(m-s+1)>\frac{(N-S)(m-s+1)+s(k-1)}{s+1}$. Alternatively,  the fractional disagreement is at most $\frac{s}{s+1}(1-\frac{mk}{(N-S)(m-s+1)})$.

\end{document}